	\providecommand\BibTeX{{%
			\normalfont B\kern-0.5em{\scshape i\kern-0.25em b}\kern-0.8em\TeX}}}
\newtheorem{thm}{Theorem}%[section]
\newtheorem{lem}[thm]{Lemma}
\newtheorem{cor}[thm]{Corollary}
\theoremstyle{definition}
\newtheorem{defn}[thm]{Definition}
\theoremstyle{remark}
\newtheorem{exmpl}[thm]{Example}
\newcommand{\DMR}{\mathrm{DMR}}
\newcommand{\DM}{\mathrm{DM}}
\let\oldLightning\Lightning
\renewcommand{\Lightning}{\text{\oldLightning}}
\newcommand{\Mod}[1]{\,\mathrm{mod}\,#1}
\begin{document}
	%% TITLE:
\title{Dawn of the Dead(line Misses): \\Impact of Job Dismiss on the Deadline Miss Rate}
	
%% AUTHORS:

	% Final version
	\author{
		\IEEEauthorblockN{Jian-Jia Chen, Mario Günzel, Peter Bella, Georg von der Brüggen}
		\IEEEauthorblockA{TU Dortmund University, Germany\\
                 Email:
                  $\left\{
                  \begin{tabular}{l}
                   \mbox{jian-jia.chen, mario.guenzel}
                    \\\mbox{peter.bella, georg.von-der-brueggen}
                  \end{tabular}
                \right\}$@tu-dortmund.de}
                \and
		\IEEEauthorblockN{Kuan-Hsun Chen}
		\IEEEauthorblockA{University of Twente, the Netherlands\\
                 Email: k.h.chen@utwente.nl
                 }
            }

\maketitle

\begin{abstract}
  Occasional deadline misses are acceptable for soft real-time
  systems. Quantifying probabilistic and deterministic characteristics
  of deadline misses is therefore essential to ensure that deadline
  misses indeed happen only occasionally.
  This is supported by
  recent research activities on probabilistic worst-case execution
  time, worst-case deadline failure probability, the maximum number of
  deadline misses, upper bounds on the deadline miss
  probability, and the deadline miss rate.

  This paper focuses on the deadline miss rate of a periodic soft
  real-time task in the long run.  Our model assumes that this soft
  real-time task has an arbitrary relative deadline and that a job can still be executed after a
  deadline-miss until a dismiss point. This model generalizes the
  existing models that either dismiss a job immediately after its
  deadline miss or never dismiss a job.
  We provide mathematical notation on the convergence
  of the deadline miss rate in the long run and essential
  properties to calculate the deadline miss rate. Specifically, we use
  a Markov chain to model the execution behavior of a periodic soft
  real-time task. We present the required ergodicity property to
  ensure that the deadline miss rate in the long run is described by a
  stationary distribution.  
  %
  % The execution of the jobs of the soft
  % real-time task $\tau$ is abstractly modeled by a greedy processing
  % component (GPC). Three types of functions are adopted to model the
  % information about the service offered by the GPC to execute the soft
  % real-time task: 1.)  \uline{\emph{deterministic supply functions}},
  % e.g., under Time Division Multiple Access (TDMA) or preemptive
  % fixed-priority scheduling with static execution times of the
  % higher-priority tasks, 2.)  \uline{\emph{probabilistic supply
  %     functions}} under a discrete probability distribution with
  % distinct supply functions, e.g., preemptive fixed-priority
  % scheduling with probabilistic execution times of the higher-priority
  % tasks, and 3.)  \uline{\emph{supply bound functions}} with the upper
  % and lower supply curves of the supply, e.g., reservation servers.
  % We show how to construct a Markov chain for these three types of
  % functions and utilize the ergodicity to derive the deadline miss
  % rate.
  % \mario{Maybe move the last paragraph to the contribution?}
\end{abstract} 
	
\section{Introduction}
\label{sec:introduction}

In classical hard real-time systems, deadline misses have to be
prevented under all circumstances.  However, in many industrial use
cases, occasional deadline misses can be
tolerated~\cite{DBLP:conf/rtss/AkessonNNAD20}.  Hence, providing
guaranteed quantification of deadline misses is important in
practice to ensure that deadline misses are indeed only occasional.
For example, safety standards such as IEC-61508 \cite{IEC} (Functional
safety) and ISO-26262 \cite{ISO} (Road vehicles - Functional safety)
specify an upper bound on the failure probability which can be very
low but not necessarily $0$. The importance of probabilistic guarantees %can be demonstrated 
is shown by
recent research activities on probabilistic worst-case execution time (c.f. the survey by Davis and
Cucu-Grosjean~\cite{DBLP:journals/lites/DavisC19}),
real-time queuing theory~\cite{DBLP:conf/rtss/Lehoczky97},
the maximum number of deadline misses over a number of task activations (e.g.,~\cite{DBLP:journals/tecs/SunN17}),
bounded tardiness (e.g., \cite{DBLP:conf/ecrts/AhmedA21}), and
probabilistic schedulability analysis~\cite{DBLP:journals/lites/DavisC19a}. According to the survey by Davis and
Cucu-Grosjean~\cite{DBLP:journals/lites/DavisC19a}, two probabilistic guarantees
are primarily studied: %, c.f. the survey by Davis and
%Cucu-Grosjean~\cite{DBLP:journals/lites/DavisC19a}:
\begin{itemize}
\item The \emph{worst-case deadline failure probability} (WCDFP) is
  the maximal probability (among all jobs) that a job misses its
  deadline. Suppose that $\DM(j)$ is the probability that the $j$-th
  job misses deadline. WCDFP is $\sup_{j \in \Nbb} \DM(j)$. This
  problem has been studied in a series of
  results~\cite{1181583,DBLP:conf/rtas/TiaDSSSWL95,DBLP:journals/rts/LopezDEG08,6728877,10.1145/3139258.3139276,DBLP:conf/rtss/ChenGBC22}. Several
  approaches, such as down-sampling~\cite{5562910, 6728877,
    markovic_et_al:LIPIcs.ECRTS.2021.16, 10.1145/2392987.2393001},
  concentration inequalities~\cite{7993392,
    vonderbrggen_et_al:LIPIcs:2018:8997, 8714908}, task-level
  convolution~\cite{vonderbrggen_et_al:LIPIcs:2018:8997,von2021efficiently},
  and Monte-Carlo simulation~\cite{9622372} have been developed to
  improve the efficiency of the derivations of WCDFP.  Stochastic
  Network Calculus \cite{DBLP:journals/comsur/FidlerR15} extends Network Calculus %to have a notion of
  with   stochastic arguments using min-plus algebra, targeting properties
  related to probabilistic measures of the queue size and response time.

\item The \emph{deadline miss rate} (denoted as deadline miss
  probability\footnote{We explain in
    Section~\ref{sec:DRM-definition-literature} why the term
    probability may be misleading.} in the survey by Davis and
  Cucu-Grosjean~\cite{DBLP:journals/lites/DavisC19a}) is the number of
  jobs missing their deadlines divided by the number of released jobs 
  (informally, $\lim_{N\to\infty}\frac{1}{N}\sum_{j=1}^{N} \DM(j)$
  and formally elaborated in Eq.~(\ref{eq:DMR_lim_our}) in
  Section~\ref{sec:problem-definition}). %This 
  The deadline miss rate has been studied
  in~\cite{DBLP:journals/lites/DavisC19a,DBLP:conf/rtss/AbeniB98,DBLP:conf/ecrts/AbeniB99,DBLP:conf/etfa/ManicaPA12,DBLP:journals/jss/AbeniMP12,DBLP:conf/ecrts/PalopoliFMA12,DBLP:journals/tpds/PalopoliFAF16,DBLP:conf/rtcsa/ChenBC18,DBLP:conf/rtas/FriasPAF17,DBLP:conf/rtcsa/FriebePN20,DBLP:conf/rtcsa/FriebeMPN21,Friebe-RTAS2023,DBLP:conf/ecrts/ManolacheEP01,DBLP:conf/iccad/ManolacheEP02,DBLP:journals/tecs/ManolacheEP04,DBLP:conf/rtas/ManolacheEP04}  
  under different settings.
\end{itemize}

% Topic of this paper
In this work, we focus on analyzing the deadline miss rate.
Intuitively, the goal is to answer the question: \emph{What is the
  ratio of jobs missing their deadline in the long run?}  Although
this questions seems quite straight forward, answering this question
has been a challenge over the last years.

%For firm real-time tasks, the notion of $(m,k)$-firm deadlines~\cite{477249} was first proposed for characterizing admissible numbers of deadline misses. That is, in any $k$ consecutive jobs, there must be at least $m$ jobs that meet their deadlines. This constraint implies that a $(k-m)/k$ maximum deadline miss rate is admissible and ensures that the misses are adequately spaced, which is widely considered in recent works for designing weakly-hard real-time systems~\cite{8743299, 919277, 10.1145/3126497}.

%\georg{Is firm and weakly-hard the same? If yes, choose on (or say they are the same). If not, how do they differ? }

For soft real-time tasks, the deadline miss rate was studied by Abeni et
al.~\cite{DBLP:conf/rtss/AbeniB98,DBLP:conf/ecrts/AbeniB99} when they
introduced the concept of constant bandwidth servers (CBS) in 1999.
Specifically, they studied two scenarios: 1) variable execution times
according to some probability distribution with fixed inter-arrival
times, and 2) variable inter-arrival times according to some
probability distribution with fixed execution times. Further
extensions were made to cover more complex scenarios, such as
probabilistic execution times and probabilistic job inter-arrival
times~\cite{DBLP:journals/jss/AbeniMP12,DBLP:conf/ecrts/PalopoliFMA12,DBLP:conf/etfa/ManicaPA12,DBLP:journals/tpds/PalopoliFAF16},
probabilistic execution times described via a Hidden Markov Model, and
execution times modeled by a Markov chain with continuous Gaussian
distributions
\cite{Friebe-RTAS2023,DBLP:conf/rtcsa/FriebeMPN21,DBLP:conf/rtcsa/FriebePN20}. \emph{These
  results all focus on a soft real-time task served independently by a
  CBS for temporal isolation.}  Specifically, Manica et
al.~\cite{DBLP:conf/etfa/ManicaPA12} and Palopoli et
al~\cite{DBLP:conf/ecrts/PalopoliFMA12,DBLP:journals/tpds/PalopoliFAF16}
use the Quasi-Birth-Death Process to efficiently compute the deadline
miss rate under CBS. These results assume that the
period of the periodic soft real-time task is an integer multiple of
the period of the CBS. Furthermore, after a soft real-time job misses its deadline,
it is executed until it finishes.

Manolache et
al.~\cite{DBLP:conf/ecrts/ManolacheEP01,DBLP:conf/iccad/ManolacheEP02,DBLP:journals/tecs/ManolacheEP04,DBLP:conf/rtas/ManolacheEP04}
study the deadline miss rate for task graphs with probabilistic
execution times under \emph{non-preemptive} scheduling
algorithms. In one line of work, they assume that a job is dismissed (aborted) immediately
after it misses its
deadline~\cite{DBLP:conf/ecrts/ManolacheEP01,DBLP:conf/iccad/ManolacheEP02,DBLP:conf/rtas/ManolacheEP04}.
%They further 
In another, they consider systems with a limit on the number of jobs of a
soft real-time task which can still be executed beyond their
deadlines~\cite{DBLP:journals/tecs/ManolacheEP04}.

For preemptive fixed-priority scheduling, the deadline miss rate on a 
uniprocessor was studied by Chen et
al.~\cite{DBLP:conf/rtcsa/ChenBC18}. Their analysis is based on the
critical instant theorem to derive the WCDFP~\cite{7993392,6728877},
which has been recently refuted by Chen et al.~\cite[Section
V.B]{DBLP:conf/rtss/ChenGBC22}.
%c.f. Section~V.B~in~\cite{DBLP:conf/rtss/ChenGBC22}.

In a nutshell, the deadline miss rate for real-time tasks has been
studied in two categories: 1) under CBS and the assumption that every
job is executed till it finishes even after a deadline miss, or 2)
under non-preemptive scheduling and the assumption that a job is
aborted immediately after its deadline miss.  Extensions to other
scheduling policies and when and whether a job should still be
executed after its deadline miss have not been well explored.

Intuitively, from the perspective of minimizing the deadline miss
rate, a job should be immediately aborted after its deadline miss to ensure 
that it has no impact on the subsequent jobs. The reason is that, although the
result of a soft real-time job may be useful if it finishes after its deadline,
executing a job after its deadline may result in further jobs missing
their deadlines. For example, when the relative deadline of a periodic
task is the same as its period, a job which has already missed its
deadline for more than $b$ periods implies that (at least) the
$b$ subsequent jobs of the periodic task also miss their deadlines. 
Therefore, the system designers may specify a time
point at which the remaining workload of a soft real-time job has to
be dismissed. \emph{To the best of our knowledge, this paper is the first
one dealing with such a feature in a deadline miss rate analysis.}

In this paper, we
analyze the deadline miss rate of a periodic soft real-time task $\tau$.
The importance of this setting is supported by an empirical
study by Akesson et al.~\cite{DBLP:journals/rts/AkessonNNAD22} that examines industrial real-time systems, reporting that 
 82\% of the investigated systems have periodic task
 activations and that 67\% of the timing constraints are soft.
 We consider a soft real-time task that is served by a greedy
 processing component (GPC), which greedily processes an unfinished
 job of the soft real-time task in the ready queue in a
 first-come-first-serve manner whenever the GPC has capacity to
 process.  The information about the service offered by the GPC to
 execute the soft real-time task is modeled by supply functions.
 
\noindent\textbf{Our Contributions:}
% It is reported in an empirical
% study~\cite{DBLP:journals/rts/AkessonNNAD22} for industrial real-time
% systems that 82\% of the investigated systems have periodic task
% activations and 67\% timing constraints are soft. 
\begin{itemize}
\item %The
  In
  Section~\ref{sec:system_model:supply}, we
  adopt two types of supply functions and provide concrete examples
  on their applicability: 1)~\uline{\emph{Deterministic supply
      functions}}, e.g., under Time Division Multiple Access (TDMA) or
  preemptive fixed-priority scheduling with static execution times of
  the higher-priority tasks. 2) \uline{\emph{Supply bound functions}} with the upper and
  lower supply curves, % of the supply,
  e.g., reservation servers. We note that this generalizes the studies
  in the literature, which focused on CBS or non-preemptive executions.
  % 3.) \uline{\emph{Probabilistic supply
  %     functions}} under a discrete probability distribution with
  % distinct supply functions, e.g., preemptive fixed-priority
  % scheduling with probabilistic execution times of the higher-priority
  % tasks. 
%
  % \begin{itemize}
  % \item \emph{deterministic and concrete supply functions}, which describe
  %   the concrete service at any time,
  % \item \emph{probabilistic supply functions}, which describe the probability
  %   of concrete service for every period of $\tau$, and
  % \item \emph{upper/lower supply bound functions}, which describe the
  %   upper and lower service provided by the GPC for every period of
  %   $\tau$.
  % \end{itemize}
  % We provide concrete examples on the applicability of supply functions, e.g., fixed-priority preemptive
  % scheduling, time division multiple access (TDMA), and constant
  % bandwidth servers, in
  % Section~\ref{sec:system_model:supply} and~\ref{sec:remarks}.
\item In Section~\ref{sec:problem-definition}, we introduce a
  mathematically rigorous definition of the deadline miss rate (DMR)
  in the long run. Further, we show in
  Section~\ref{sec:DRM-definition-literature} that the
  state-of-the-art definition given in the survey by Davis and
  Cucu-Grosjean~\cite{DBLP:journals/lites/DavisC19a} is a
  special case of ours.
\item In Section~\ref{sec:markov} we demonstrate how to represent deadline miss behavior through Markov chains. We present the theory behind the limiting behavior of Markov chains and quantify the deadline miss rate based on the limiting behavior in Section~\ref{sec:convergence-ergodicity}.
\item %After establishing the theory of DMR based on Markov chains, 
We provide
  %present 
algorithms to construct Markov chains for GPCs specified with
deterministic and concrete supply functions in
Section~\ref{sec:DMR-supply-function} and upper/lower supply bound
functions in Section~\ref{sec:DMR-supplybound}.
% , and probabilistic
% supply functions in Section~\ref{sec:DMR-Probabilistic}.
Further, we provide examples to show how our results can be utilized. 
%Each of them
%is supplied with at least one example to demonstrate how to utilize
%our results.
\item The methodology presented in the paper can be easily extended to different scenarios that are left out to simplify the presentation.
For example, assumptions on supply functions can be further weakened, and probabilistic supply functions can be treated similarly if a larger Markov chain can be affordable.
We discuss such extensions and the scalability of our approach in Section~\ref{sec:remarks}.
\end{itemize}

\section{System Model}
\label{sec:system_model}

Real-time systems can be \emph{hard} or \emph{soft}. A hard real-time
system does not tolerate any deadline miss, as it may result in
catastrophic consequences. A soft real-time system can tolerate
occasional deadline misses. Classically, depending on whether the
completion of a tardy job (after its deadline) has any utility to the
system, it is classified as either a firm real-time system (if the
tardy job has no utility at all) or a soft real-time system (if the
completion of the tardy job has certain utility).

In this paper, we focus on one
real-time task $\tau$, which is serviced by a greedy processing
component (see Section~\ref{sec:system_model:supply}) in a
uniprocessor system.  The task $\tau$ is modeled by a tuple
$(C, D, \delta, T)$, where $T > 0$ is the period
%of the soft real-time task, 
of $\tau$, $D > 0$ is its relative deadline, and
$\delta \geq 0$ is its \emph{relative dismiss point after deadline misses}. 
It %The soft
%real-time task $\tau$ 
releases an infinite number of successive task
instances, called jobs. The  \mbox{$j$-th} job of $\tau$ is denoted by $J_j$. % of the
%soft real-time task.  
We assume that the first job %of the soft
%real-time task $\tau$ 
is released at time $0$. Hence, job $J_j$ is
released at time $(j-1)T$ and its absolute deadline is $(j-1)T+D$. If job
$J_j$ misses its deadline at $(j-1)T+D$, \emph{it is allowed to be further
executed} according to the relative dismiss point $\delta$ (that is, \uline{up to its absolute dismiss point} \mbox{$(j-1)T+D+\delta$}), and \uline{is dismissed afterwards}. We do not
assume any relationship of $D$ and $T$ (i.e., $\tau$ is an arbitrary-deadline task).

We note that this model generalizes the existing models that either
dismiss a job immediately after its deadline miss or never dismiss a
job. That is, with the configuration of the dismiss point after
deadline misses, we implicitly consider both the firm and soft
real-time task model in a general setting:
\begin{itemize}
\item For a firm real-time task, if the deadline is missed, then the
  job should be dismissed immediately, i.e., $\delta=0$.
\item For a soft real-time task, there are multiple  scenarios:
  \begin{itemize}
  \item If a job of the task is only considered useful up to a certain
    point after its deadline miss, then $\delta$ can be set to that
    specific point. We focus on the analysis of the deadline
    miss rate, provided that $\delta$ is specified.
  \item If a job cannot be dismissed until it finishes and the task's
    worst-case response time %of the task 
    is bounded, then $\delta$ is
    set to its worst-case response time minus its relative deadline.
  \item If a job cannot be dismissed until it finishes and the
    worst-case response time of the task is unbounded, then $\delta$
    is mathematically $\infty$. From the modeling perspective, this is
    also a feasible option, but our proposed method cannot be applied
    for this scenario as it may lead to an infinite Markov chain and the
    fundamental properties of this paper are based on finite Markov
    chains. %We note that such a 
    Thus, this case requires further
    explorations. However, practically, this may not be an interesting
    problem, as the scenario implies %that there is 
    any number of consecutive deadline misses can occur with
    a non-zero probability, resulting in a large interval of time where
    almost all jobs of this soft real-time task miss their
    deadlines. 
  \end{itemize}
\end{itemize}
For the rest of this paper, we call both of them soft real-time tasks,
as they both can tolerate occasional deadline misses.

$C$ is a random variable to describe the execution time of~$\tau$. %the
%periodic soft real-time 
 We assume that $C$ follows a
discrete distribution with \mbox{$h\geq 1$} distinct values
$e_1< e_2< \ldots< e_h$.  The execution times of the jobs
$J_j,\,j \in \Nbb\!=\!\setof{1,2,3,\dots}$ are described by the random variables
$C_j,\,j \in \Nbb$ which are independent copies of $C$. We denote by
$\mathbb{P}(X = x)$ the probability that a random variable $X$ is
equal to $x$. We assume that the actual execution time of job $J_j$ is
one of the given $h$ distinct values and that the sum of their
probabilities is $100\%$, i.e.,
$\sum_{k=1}^{h} \mathbb{P}(C_j = e_k) = 100\%$.  Furthermore, we
assume that the execution times of the soft real-time jobs are
\emph{independent and identically distributed} (iid).
Thus, their joint probability is equal to the product of their
probabilities. This is a commonly adopted  assumption  in the
literature, c.f.,~\cite{7993392,
  vonderbrggen_et_al:LIPIcs:2018:8997, 8714908, 6728877,
  DBLP:journals/lites/DavisC19a}.

\section{Supply Bound Functions}
\label{sec:system_model:supply}

The execution of the jobs of the soft real-time task $\tau$ is
abstractly modeled by a greedy processing component (GPC), which 
%The GPC
%greedily 
processes an unfinished job of $\tau$ %the soft real-time task 
in the ready queue in a first-come-first-serve (FCFS) manner whenever the GPC
has capacity to process.
The abstraction of GPC has
  been adopted in Real-Time Calculus~\cite{TCN00,WTVL06}. This allows
  us to model a variety of scheduling policies (see Examples~\ref{example:supply-function},
  \ref{example:supply-function-v2},
  \ref{example:supply-bound-functions}, and 
  \ref{example:supply-bound-functions-v2}).

To describe the capacity of the service
provided by the GPC to serve jobs of $\tau$, let $\beta_j(t)$
be the amount of accumulative service (supply) the GPC provides in the
time interval \mbox{$[(j-1)T, (j-1)T+t)$} for $0 \leq t \leq T$ and
\mbox{$j \in \Nbb\!=\!\setof{1,2,3,\dots}$}. By definition, $\beta_j(0)=0$,
$\beta_j(t)$ is non-decreasing, and
$\beta_j(t+\epsilon) \leq \beta_j(t) + \epsilon$ for any $\epsilon > 0 $
with $t+\epsilon \leq T$.
  
Depending on the accuracy of specification\footnote{For discussions on specific over general modelling, see~\cite{DBLP:conf/rtcsa/BruggenBCDR22}.}, 
we consider two different models of accumulative service for the
GPC. 
%\georg{motivate why}

\begin{itemize}
\item \textbf{Supply Functions}: $\beta_j(t)$ is deterministic.
% \item \textbf{Probabilistic Supply Functions}: $\beta_j$ is a random variable that follows a discrete probability
%   distribution with $g\geq 2$ distinct supply functions denoted by
%   $\beta_j^1, \beta_j^2, \ldots, \beta_j^g$. Let
%   \mbox{$\mathbb{P}(\beta_j = \beta_j^i)$}
%   be the probability that the $i$-th supply function is applied for
%   $\beta_j$.  We assume that the probabilistic supply functions $\beta_1, \beta_2, \dots$ are
%   \emph{independent and identically distributed} (iid) from each other.
\item \textbf{Supply Bound Functions}: Let $\beta_j^u(t)$ (respectively,
  $\beta_j^l(t)$) be the upper (respectively, lower) supply curve
  such that
  \begin{equation}
    \label{eq:supply-bound-function}
    \beta_j^u(t) \leq \beta_j(t) \leq \beta_j^l(t)
  \end{equation}
  for any $0 \leq t \leq T$.
\end{itemize}

We now show some examples of these models to explain how they can be applied to certain scenarios:
\begin{exmpl}[\textbf{Supply Function}]
  \label{example:supply-function}
  Suppose the period $T$ of the soft real-time task $\tau$ is $3$,
  served by time division multiple access (TDMA), in which the TDMA
  cycle is $3$ and the service is from time $1$ to time $3$ (i.e., $2$
  units of time) within the TDMA cycle. In this case,
  $\forall 0 \leq t \leq 3$ and
  $\forall j \in \Nbb\!=\!\setof{1,2,3,\dots}$,
  \begin{align}
    \beta_j(t) & = \left\{
                  \begin{array}{ll}
                    0 & \mbox{ if }0 \leq t < 1\\
                    t-1 & \mbox{ otherwise.}\\
                  \end{array}
                  \right.
  \end{align}
  \qed
\end{exmpl}

\begin{exmpl}[\textbf{More Complex Supply Function}]
  \label{example:supply-function-v2}
  Suppose that there are two periodic tasks scheduled under preemptive
  fixed-priority scheduling in a uniprocessor system. The
  higher-priority task is a hard real-time task, releasing its first
  job at time $0$, with period $3$ and an actual execution time
  of $1$.
  Suppose the period $T$ of the lower-priority soft real-time task
  $\tau$ is $4$.  In this case, $\forall 0 \leq t \leq 4$ and
  $\forall j \in\setof{1,4,7,\dots} \subseteq \Nbb$,
  \begin{align}
    \beta_j(t) & = \left\{
                  \begin{array}{ll}
                    0 & \mbox{ if }0 \leq t < 1\\
                    t-1 & \mbox{ if } 1 \leq t < 3\\
                    2 & \mbox{ otherwise,}\\
                  \end{array}
                  \right.
  \end{align}
  $\forall 0 \leq t \leq 4$ and
  $\forall j \in \setof{2,5,8,\dots} \subseteq  \Nbb$,
  \begin{align}
    \beta_j(t) & = \left\{
                  \begin{array}{ll}
                    t & \mbox{ if }0 \leq t < 2\\
                    2 & \mbox{ if } 2 \leq t < 3\\
                    t-1 & \mbox{ otherwise,}\\
                  \end{array}
                  \right.
  \end{align}
  and
  $\forall 0 \leq t \leq 4$ and
  $\forall j \in\setof{3,6,9,\dots}\subseteq \Nbb$,
  \begin{align}
    \beta_j(t) & = \left\{
                  \begin{array}{ll}
                    t & \mbox{ if }0 \leq t < 1\\
                    1 & \mbox{ if } 1 \leq t < 2\\
                    t-1 & \mbox{ otherwise.}\\
                  \end{array}
                  \right.
  \end{align}  
  \qed
\end{exmpl}

\begin{exmpl}[\textbf{Supply Bound Functions}]
  \label{example:supply-bound-functions}
  Suppose the period~$T$ of the soft real-time task is $4$, served by
  a \emph{hard constant bandwidth server} (CBS), in which the CBS has
  a budget of $0.5$ and a period of $1$. We further assume that the
  CBS is guaranteed to provide the service.  In this case,
  $\forall 0 \leq t \leq 4$ and
  $\forall j \in \Nbb\!=\!\setof{1,2,3,\dots}$, the upper and lower
  supply curves are as follows:
  \begin{align}
    \beta_j^u(t) & = 0.5 \floor{t} + \min\{t-\floor{t}, 0.5\}.\\
    \beta_j^l(t) & = 0.5 \floor{t} + \max\{t-\floor{t}-0.5, 0\}.
  \end{align}
  \qed
\end{exmpl}

\begin{exmpl}[\textbf{More Complex Supply Bound Functions}]
  \label{example:supply-bound-functions-v2}
  The upper and lower supply bound functions of
  Example~\ref{example:supply-function-v2} are:
  \begin{itemize}
  \item $\forall j \in \setof{1,4,7,\dots} \subseteq \Nbb$
    \begin{itemize}
    \item $\beta_j^u(t) = t$ for $0 \leq t < 2$, and \\ $\beta_j^u(t) = 2$  for \mbox{$2 \leq t \leq 4$,}
    \item $\beta_j^l(t) = 0$ for $0 \leq t < 2$, and \\ $\beta_j^l(t) = t-2$
      for $2 \leq t \leq 4$,
    \end{itemize}
  \item $\forall j \in \setof{2,5,8,\dots} \subseteq \Nbb$ and
    $\forall j \in \setof{3,6,9,\dots} \subseteq \Nbb$,
    \begin{itemize}
    \item $\beta_j^u(t) = t$ for $0 \leq t < 3$, and \\ $\beta_j^u(t) = 3$
      for \mbox{$3 \leq t \leq 4$,}
    \item $\beta_j^l(t) = 0$ for $0 \leq t < 1$, and \\ $\beta_j^l(t) = t-1$
      for $l \leq t \leq 4$,
    \end{itemize}
  \end{itemize}
  \qed
\end{exmpl}

% \begin{exmpl}[\textbf{Probabilistic Supply Functions}]
%   \label{example:supply-probabilistic-functions}
%   Suppose that there are two periodic tasks scheduled under preemptive
%   fixed-priority scheduling in a uniprocessor system, which both have  the same period of $6$. The
%   higher-priority task is a hard real-time task with two distinct possible execution times, namely  execution time  $2$ with probability $p$ and
%   execution time  $4$ with probability $1-p$. The job
%   executions of this hard real-time task are iid.  
%   %Both periodic tasks
%   %have the same period of $6$.  %In this case,
%  Hence, $\forall 0 \leq t \leq 6$ and
%   $\forall j \in \Nbb\!=\!\setof{1,2,3,\dots}$, there are two possible
%   supply functions:
%   \begin{align}
%     \beta_j^1(t) & = \left\{
%                   \begin{array}{ll}
%                     0 & \mbox{ if }0 \leq t < 2\\
%                     t-2 & \mbox{ otherwise}\\
%                   \end{array}
%                   \right.
%   \end{align}
%   \begin{align}
%    \beta_j^2(t) & = \left\{
%                   \begin{array}{ll}
%                     0 & \mbox{ if }0 \leq t < 4\\
%                     t-4 & \mbox{ otherwise,}\\
%                   \end{array}
%                   \right.
%   \end{align}
%   where $\beta_j^1$ is applied with a probability of $p$ and
%   $\beta_j^ 2$ with a probability of $1-p$. 
%   Due to the assumption that
%   the execution times of the jobs of the higher-priority task is iid,
%   the applied supply curves are also iid.  \qed
% \end{exmpl}

In the above examples, 
%We further assume that 
the given supply functions and 
the bounded supply
functions %, or the probabilistic supply functions 
are repeated patterns
for every $Q \in \mathbb{N}$ jobs of the soft real-time tasks. 
%In the four examples above, 
Specifically, $Q$ is $3$ in
Examples~\ref{example:supply-function-v2}~and~\ref{example:supply-bound-functions-v2}
and $Q$ is $1$ in Examples~\ref{eq:supply-bound-function}~and~\ref{example:supply-bound-functions}. Such a repetitive pattern is necessary to determine the supply for an infinite sequence of jobs. For the rest of this paper, we assume that the supply functions and $Q$ are
specified.

\textbf{Remarks:} Modeling the service provided to a task using supply
(bound) functions has been widely studied in the literature 
(c.f. hierarchical scheduling in uniprocessor
systems~\cite{DBLP:conf/rtss/DavisB05,DBLP:conf/ecrts/SaewongRLK02}
and in multiprocessor systems~\cite{DBLP:conf/rtcsa/BiniBB09}, as well as
Network Calculus~\cite{LT01} and Stochastic Network
Calculus~\cite{DBLP:journals/comsur/FidlerR15}). However, to the best
of our knowledge, adopting supply (bound) functions for the purpose of
analyzing the deadline miss rate of a soft real-time task has never
been reported before. In the literature, there are only results based
on CBS or non-preemptive executions.

\emph{\uline{This paper sets its focus on the deadline miss rate
    analysis under the assumption that the supply functions or supply
    bound functions are specified.}} We have demonstrated several
useful cases and examples above, but how to derive the tightest supply
(bound) function is not the scope of this paper. However, deriving supply bound functions for
arbitrary interval lengths %can be found 
has been discussed in the literature,
e.g.,~\cite{WTVL06}. These methods can be altered to fit the needs
above.

\section{Problem Definition}
\label{sec:problem-definition}

% When considering soft real-time systems, some deadline misses can be tolerated 
% but there are different approaches to quantify the number of deadline misses.
%While the \emph{worst-case deadline failure probability} (WCDFP) is
%more concerned about the potential of a deadline miss of one job in
%the worst case, for the \emph{deadline miss rate} we need to answer %the following  
%the question:
This paper studies the following question:
\emph{Which percentage of deadline misses can be expected \uline{in the long run}?}
To answer this question, we first look at the definition for bounded intervals (Section~\ref{sec:problem-definition:bounded}) which we then extend to infinite intervals (Section~\ref{sec:problem-definition:long_run}). Afterwards, we discuss how our definition relates to definitions from the literature in Section~\ref{sec:DRM-definition-literature}.   

\subsection{Deadline Miss Rate of the First $N$ Jobs}\label{sec:problem-definition:bounded}
Let $\DM(j)$ be a random variable, indicating whether the \mbox{$j$-th} job $J_j$ of the soft real-time task $\tau$ misses its deadline. That is, $\DM(j)=1$ if $J_j$ has a deadline miss and $\DM(j)=0$ if $J_j$ successfully finishes until its deadline. Therefore, 
the \emph{deadline miss rate of the first}  $N \in \Nbb$ \emph{jobs} of the soft real-time task $\tau$ is a \emph{random variable} that is %achieved 
determined by counting the number of deadline misses and dividing by $N$: % as follows:
\begin{equation}
  \DMR_N = \frac{1}{N} \sum_{j=1}^{N} \DM(j)
\end{equation}

\begin{figure}
\vspace{-1.25cm}
  \centering
  % Set the overall layout of the tree
\tikzstyle{level 1}=[level distance=2cm, sibling distance=6cm]
\tikzstyle{level 2}=[level distance=2cm, sibling distance=3cm]
\tikzstyle{level 3}=[level distance=2.5cm, sibling distance=2cm]

% Define styles for bags and leafs
% \tikzstyle{bag} = [text width=4em, text centered]
\tikzstyle{end} = [circle, minimum width=3pt,fill, inner sep=0pt]

% The sloped option gives rotated edge labels. Personally
% I find sloped labels a bit difficult to read. Remove the sloped options
% to get horizontal labels. 
\begin{tikzpicture}[grow=right, sloped]
\scalebox{0.8}{
\node[end] { } %{$(0,0)$}
    child {
        node[] {(\checkmark, 0)}        
            child {
%                node[label=right:
%                    {$\DMR_2=0.0$ with prob. $\frac{9}{16}$}] {(\checkmark, 0)}
                node[] {(\checkmark, 0)}
                child{
                  node[label=right:
                    {$\DMR_3=0$ with prob. $\frac{1}{8}$}] {(\checkmark, 0)}
                  edge from parent
                  node[above] {$C_3=3$}
                  node[below]  {$\frac{1}{2}$}
                }
                child{
                  node[label=right:
                    {$\DMR_3=0$ with prob. $\frac{1}{8}$}] {(\checkmark, 0)}
                  edge from parent
                  node[above] {$C_3=2$}
                  node[below]  {$\frac{1}{2}$}
                }
                edge from parent
                node[above] {$C_2=2$}
                node[below]  {$\frac{1}{2}$}
            }
            child {
                node[]
                {(\checkmark, 0)}
                child{
                  node[label=right:
                    {$\DMR_3=0$ with prob. $\frac{1}{8}$}] {(\checkmark, 0)}
                  edge from parent
                  node[above] {$C_3=3$}
                  node[below]  {$\frac{1}{2}$}
                }
                child{
                  node[label=right:
                    {$\DMR_3=0$ with prob. $\frac{1}{8}$}] {(\checkmark, 0)}
                  edge from parent
                  node[above] {$C_3=2$}
                  node[below]  {$\frac{1}{2}$}
                }                
                edge from parent
                node[above] {$C_2=3$}
                node[below]  {$\frac{1}{2}$}
            }
            edge from parent 
            node[above] {$C_1=2$}
            node[below]  {$\frac{1}{2}$}
    }
    child {
        node[] {(\Lightning, 1)}        
        child {
            node[]
            {(\checkmark, 0)}
            child {
              node[label=right:
              {$\DMR_3=\frac{1}{3}$ with prob. $\frac{1}{8}$}]
              {(\checkmark, 0)}
              edge from parent
              node[above] {$C_3=2$}
              node[below]  {$\frac{1}{2}$}
            }
            child {
              node[label=right:
              {$\DMR_3=\frac{1}{3}$ with prob. $\frac{1}{8}$}]
              {(\checkmark, 0)}
              edge from parent
              node[above] {$C_3=3$}
              node[below]  {$\frac{1}{2}$}
            }
                edge from parent
                node[above] {$C_2=2$}
                node[below]  {$\frac{1}{2}$}
          }
          child {
              node[]
              {(\Lightning, 1)}
              child {
                node[label=right:
                {$\DMR_3=\frac{2}{3}$ with prob. $\frac{1}{8}$}]
                {(\checkmark, 0)}
                edge from parent
                node[above] {$C_3=2$}
                node[below]  {$\frac{1}{2}$}
              }
              child {
                node[label=right:
                {$\DMR_3=1$ with prob. $\frac{1}{8}$}]
                {(\Lightning, 0)}
                edge from parent
                node[above] {$C_3=3$}
                node[below]  {$\frac{1}{2}$}
              }
              edge from parent
              node[above] {$C_2=3$}
              node[below]  {$\frac{1}{2}$}
          }
        edge from parent         
            node[above] {$C_1=3$}
            node[below]  {$\frac{1}{2}$}
          };}
        
      \end{tikzpicture}
\vspace{-1.25cm}
\caption{DMR of the first three jobs in
  Example~\ref{example:DMR-N-variable}.}
  \label{fig:DMR_after_3}
\end{figure}

\begin{exmpl}
  \label{example:DMR-N-variable}
  Consider a soft real-time task $\tau$ with $T=D=4$, $\delta=1$, and two
  possible of execution times
  \begin{itemize}
  \item $e_1=2$ and $e_2=3$, and
  \item $\Pbb(C_j = e_1) = 0.5$ and  $\Pbb(C_j = e_2) = 0.5$, $\forall j \in \Nbb$.
  \end{itemize}
  Suppose %that this task 
  $\tau$ is served by the supply function in
  Example~\ref{example:supply-function-v2}.
  
  Based on Figure~\ref{fig:DMR_after_3}, we can calculate the probability
  distribution of the random variable $\DMR_3$ by evaluating all
  scenarios. Each state
  $(\{\checkmark, \Lightning\},w)$ consists of two entries.  In
  particular, the first entry indicates if the job under consideration
  has a deadline miss (\checkmark{} for no deadline miss and
  \Lightning{} for deadline misses).
  The second entry $w$ is the amount of backlog (i.e., unfinished execution time)
  to be executed after the period and before its
  absolute dismiss point. That is, during $[4, 5)$ for $J_1$, during $[8, 9)$ for $J_2$,
  and during $[12, 13)$ for $J_3$.\footnote{This example could also be interpreted using $(\{\checkmark, \Lightning\})$ instead of $(\{\checkmark, \Lightning\},w)$ for each state. The reason why the backlog $w$ is introduced will be explained later in the paper.} If $C_1$ is $2$ for $J_1$, then no matter
  what happens with $J_2$ and $J_3$, the three jobs meet their
  deadlines. If $C_1$ is $3$ for $J_1$, then $J_1$ misses its
  deadline, and a backlog of one time unit must be executed in $[4, 5)$. Furthermore, if the execution time $C_2$ of $J_2$ is also
  $3$, then $J_2$ misses its deadline as well; otherwise if $C_2$ is
  $2$, then $J_2$ meets its deadline. Similarly, we can analyze the
  deadline miss of $J_3$ accordingly. We note that the scenario
  $C_1=3, C_2=3, C_3=3$ has one unit of backlog of $J_3$ at time $12$,
  but since the supply function does not provide any service to the
  soft real-time task $\tau$ from $12$ to $13$, the backlog that must be
  executed is set to $0$.

  %In particular, the distribution is 
  We get the distribution $\Pbb(\DMR_3=0)=\frac{1}{2}$,
  $\Pbb(\DMR_3=\frac{1}{3})=\frac{1}{4}$, $\Pbb(\DMR_3=\frac{2}{3})=\frac{1}{8}$, and
  $\Pbb(\DMR_3=1)=\frac{1}{8}$.
  \qed
\end{exmpl}

\begin{figure*}
  \includegraphics[width=0.25\linewidth]{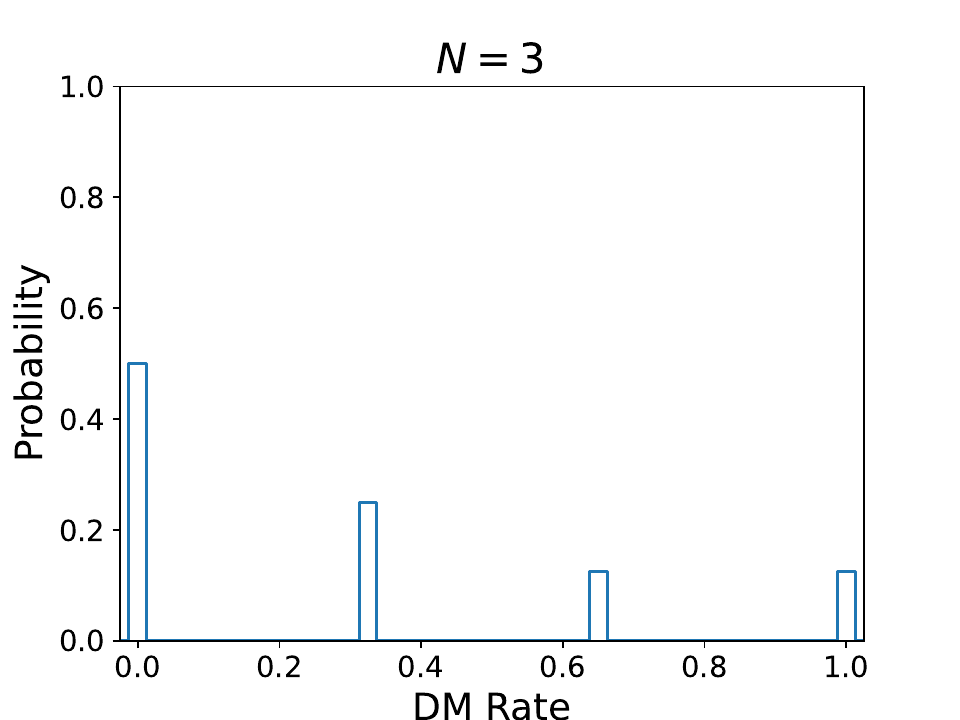}%
  \includegraphics[width=0.25\linewidth]{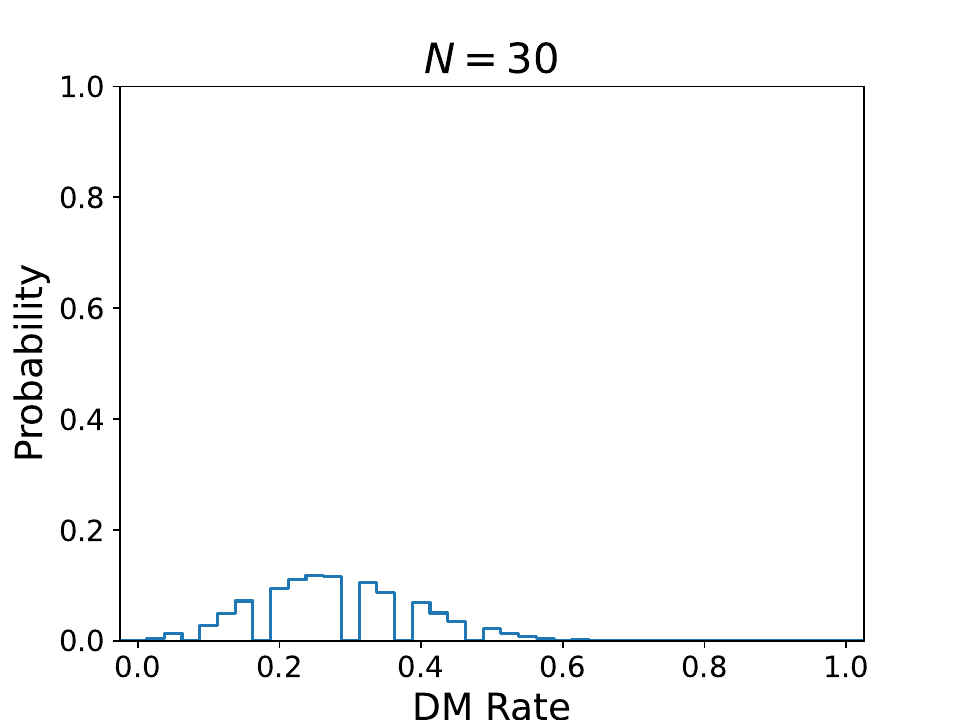}%
  \includegraphics[width=0.25\linewidth]{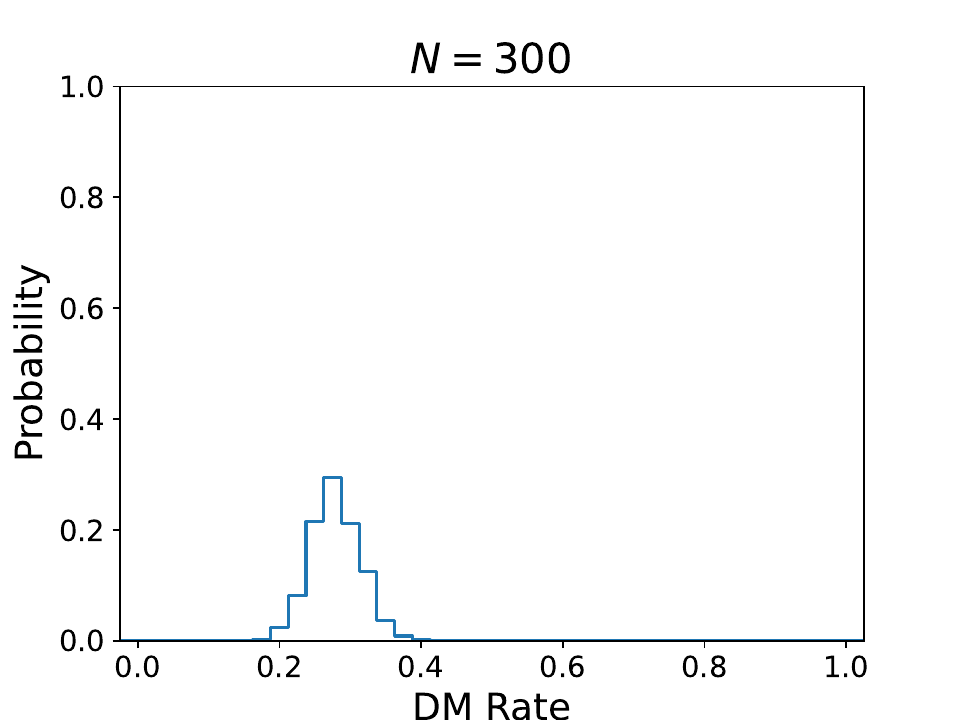}%
  \includegraphics[width=0.25\linewidth]{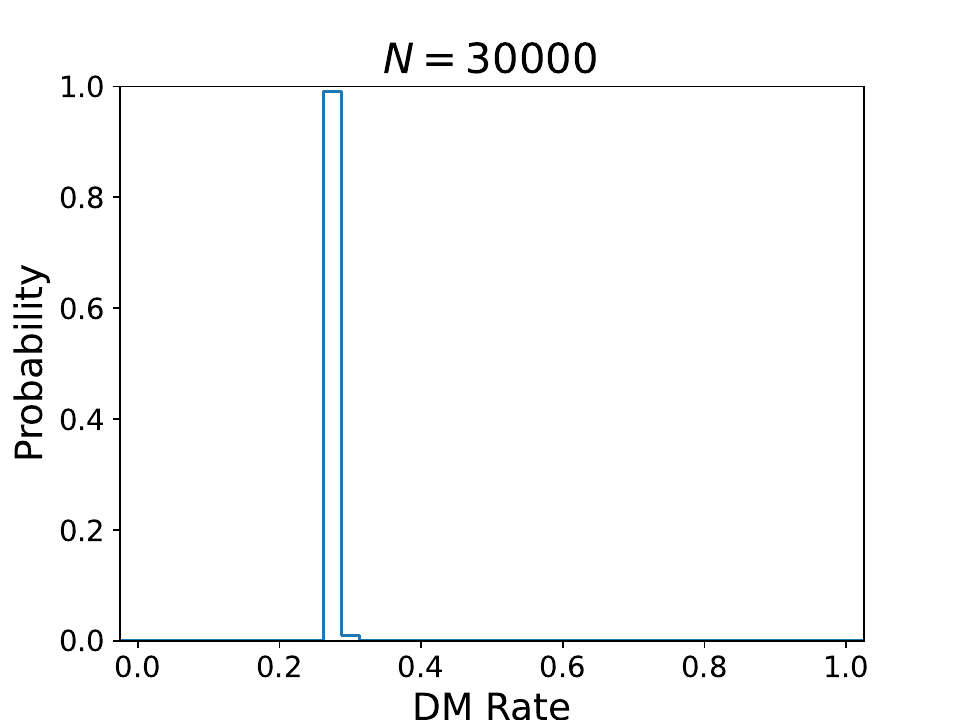}%
  \caption{Probability distribution of $\DMR_N$ of the first $N$ jobs. We observe that the distribution converges to a constant value.}
  \label{fig:DMR-N-approaching-convergence}
\end{figure*}

\subsection{Deadline Miss Rate in the Long Run}\label{sec:problem-definition:long_run}

Intuitively, the
deadline miss rate of the soft real-time task~$\tau$ in the long run is
simply the random variable $\DMR_N$ when $N\to \infty$. More
precisely, to derive the deadline miss rate in the long run, we need to
show that $\DMR_N$ converges towards a single value. 

Such convergence is a central property to derive the deadline miss rate.  
However, the existence and the computation of such a limiting
distribution is not trivial.  In this paper we answer the question:
\emph{Does $\DMR_N$ converge to a single value? If yes, to which value?}  If $\DMR_N$  almost surely converges to a constant value
\mbox{$\DMR\in \Rbb$}, then we say that $\DMR$ is the
\emph{deadline miss rate}.  Specifically, $\DMR$ satisfies the
following equation:
\begin{equation}\label{eq:DMR_lim_our}
  \Pbb\left( \lim_{N\to \infty} \DMR_N = \DMR\right) = 1
\end{equation}

In this paper we compute the deadline miss rate for the scenarios
covered in the system model.  Moreover, we determine conditions and
present the theory that justifies the existence of the deadline miss
rate. The following example demonstrates the convergence of
Example~\ref{example:DMR-N-variable}.

\begin{exmpl}
  \label{example:N-approches-infinity-converge}
  By extending Example~\ref{example:DMR-N-variable} for $N=30, 300, 30000$
  jobs, we observe that the probability distribution converges
  towards a single value as shown in
  Figure~\ref{fig:DMR-N-approaching-convergence}.
  \qed
\end{exmpl}

\subsection{DMR in the Literature}
\label{sec:DRM-definition-literature}

%In the literature, the deadline miss rate has been used to describe the behavior in the long run without proper justification of the convergence of the deadline miss rate.
We utilize the recent survey by Davis and Cucu-Grosjean~\cite{DBLP:journals/lites/DavisC19a}, which defines the deadline miss rate (they call it the \emph{deadline miss probability}) for synchronous periodic tasks as follows:
\begin{equation}\label{eq:DMR_davis_cg}
  \DMR = \frac{1}{H} \sum_{j=1}^{H} \Pbb(\DM(j)=1)
\end{equation}
(restated from \cite[Definition~10]{DBLP:journals/lites/DavisC19a}), where $H$ is the number of jobs in one hyperperiod.
Their formula is actually a simplification of our definition and is a special case.
In their paper, all $H$ jobs are assumed to be served or dismissed until the $(H+1)$-th job is released. 
Therefore, under this assumption, the deadline miss rate of the $H$
jobs within a hyper-period is independent of the deadline miss rate of
the $H$ jobs in the previous hyper-period.
Moreover, the deadline miss rate of the first $H$ jobs and the $k$-th $H$ jobs, $k\geq 2$, are identically distributed. 
In the following, we show that the definition from Eq.~\eqref{eq:DMR_davis_cg} is a simplification of our definition from Eq.~\eqref{eq:DMR_lim_our}.

Let $Y_i = \frac{1}{H} \sum_{j=1}^H DM(j+(i-1)H)$ be the deadline miss rate of $H$ jobs in the $i$-th hyper-period of the soft real-time task $\tau$, i.e., the jobs from time $(i-1)HT$ to $iHT$.
Due to the repetitive pattern of periodic tasks, the random variables $Y_i, i=1,2, \dots$ are independent and identically distributed. 
Hence, the strong law of large number indicates that 
$\frac{1}{N}\sum_{i=1}^{N} Y_i$ converges almost surely to the expected value, i.e., 
\begin{equation}
  \Pbb\left(\lim_{N\to \infty} \frac{1}{N}\sum_{i=1}^{N} Y_i = \Ebb[Y_1]\right) = 1.
\end{equation}

If the limit of $\DMR_N = \frac{1}{N} \sum_{j=1}^{N} \DM(j)$ for $N\to \infty$ exists, then
\begin{align*}
  &\lim_{N\to \infty} \frac{1}{N} \sum_{j=1}^{N} \DM(j) 
  = \lim_{N\to \infty} \frac{1}{N \cdot H} \sum_{j=1}^{N\cdot H} \DM(j)\\
  = &\lim_{N\to \infty} \frac{1}{N} \sum_{i=1}^{N} \frac{1}{H} \sum_{j=1}^{H} \DM(j+(i-1)H)
   =  \lim_{N\to \infty} \frac{1}{N} \sum_{i=1}^{N} Y_i
\end{align*}
holds. 
Moreover, since the expected value is a linear and additive function even for dependent random variables,
\begin{align}
  \Ebb[Y_1] 
  = \frac{1}{H} \sum_{j=1}^H \Ebb[DM(j)]
  = \frac{1}{H} \sum_{j=1}^H \Pbb(DM(j)=1)
\end{align}
holds. 
We conclude that
\begin{align}
  \Pbb\left(\lim_{N\to \infty} \frac{1}{N}\sum_{j=1}^{N} \DM(j)  = \frac{1}{H} \sum_{j=1}^H \Pbb(DM(j)=1) \right) = 1
\end{align}
and the deadline miss rate is $\frac{1}{H} \sum_{j=1}^H \Pbb(DM(j)=1)$.

However, we note that Eq.~(\ref{eq:DMR_davis_cg}) cannot be applied
when not all $H$ tasks can be completely served or dismissed until the
$(H+1)$-th job is released.

We now also explain why we believe the term \emph{deadline miss rate}
is more appropriate than the term \emph{deadline miss probability}
used in the survey by
Davis~and~Cucu-Grosjean\cite{DBLP:journals/lites/DavisC19a}.  In
Eq.~(\ref{eq:DMR_lim_our}), although $\DMR_N$ can be interpreted as a
measure between $0\%$ and $100\%$, this is not a probability but
percentage of jobs with deadline misses when $N\to \infty$. Mathematically, in
Eq.~(\ref{eq:DMR_lim_our}), the event that we measure probabilistically happens almost
surely.\footnote{Almost surely means that the probability is $100\%$. }

\section{State Reduction and Markov Chains}
\label{sec:markov}

The soundness of our analysis is built on the well-established
foundation of Markov chains. To properly state the prerequisites of
the underlying properties, in this section, we provide the notation
used for Markov chains, following the work of
Norris~\cite[Chapter~1]{norris_1997}. Later, in
Section~\ref{sec:convergence-ergodicity}, we discuss the properties
for convergence and ergodicity.

We consider discrete time Markov chains denoted as
\mbox{$X_{\bullet} = (X_n)_{n\in \Nbb}$.}
Let $S$ be a countable set, the so-called \emph{state space}.
Each $X_n, n\in\Nbb$ is a random variable with values in $S$.
For our case, $X_n$ is the random variable that indicates the state of the $n$-th job. 
The probability that the $n$-th job is of state $s \in S$ is denoted by $\Pbb(X_n=s)$.

The transition from one state $X_n$ to the next state $X_{n+1}$ is described by a stochastic matrix\footnote{Specifically, we consider a left stochastic matrix, for which each entry $P_{r,s}$ is in $[0,1]$ and the columns add up to $1$.} $P = (P_{r,s})_{r,s \in S}$. 
In particular, for $r,s \in S$, if $X_n = s$ then the probability that $X_{n+1} = r$ is $P_{r,s} \in [0,1]$.

The initial distribution of the Markov chain is denoted as $\lambda =(\lambda_s)_{s\in S}$ and describes the probability distribution of $X_1$.
More specifically, $\Pbb(X_1 = s) = \lambda_s$ for all $s\in S$.

Fundamental for a Markov chain is the \emph{Markov property}.
That is, the probabilistic behavior of $X_{n+1}$ only depends on the result of $X_n$ and not of the preceding trace $X_1,\dots, X_{n-1}$.
This property is also called \emph{memoryless} for Markov chains.
More formally, the Markov property is fulfilled if
\begin{equation}\label{eq:markov_property}
  \begin{aligned}
    &\Pbb(X_{n+1} = s_{n+1} | X_1 = s_1, \dots, X_n=s_n) 
    \\&= \Pbb(X_{n+1} = s_{n+1} | X_n=s_n),
  \end{aligned}
\end{equation}
where $\Pbb(A | B) = \frac{\Pbb(A \cap B)}{\Pbb(B)}$ is the conditional probability.

Take the scenario depicted in Figure~\ref*{fig:DMR_after_3} as an
example. Each tuple of $(\{\mbox{\Lightning}, \checkmark\},w)_j$ is a
state of job $J_j$. Figure~\ref{fig:markov-complete-Example2} shows the
corresponding states and their transitions, in which the states in
$j$-th column represent for the possible states of $J_j$ for
$j=1,2,3$. Although Figure~\ref{fig:markov-complete-Example2} is only
for three jobs, it can be further extended to any positive integer
$n$. In this representation, there are two possible initial states in
the first column, each with a probability $50\%$, i.e., $\lambda$ is
specified. Furthermore, the Markov property is satisfied because the
state transition from $X_j$ to $X_{j+1}$ only depends on the
probability of $C_{j+1}$ and the state of $X_j$. Therefore, this is a
Markov chain with infinite states and can be represented by a matrix
with an infinite number of entries.

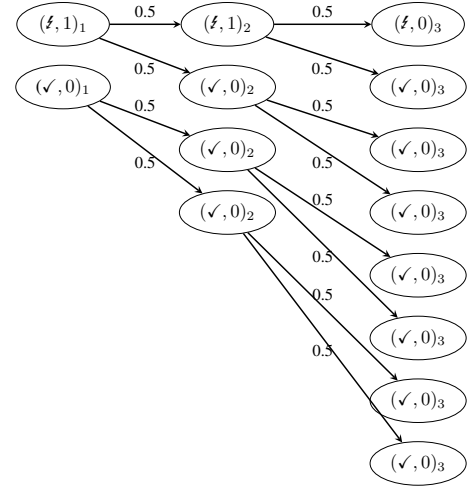
\begin{figure}[t]
  \centering
\scalebox{0.7}{

\begin{tikzpicture}[
  node distance = 0.35cm, auto, >=stealth,
  state/.style={ellipse, draw, minimum width=50pt},
  arrow/.style={->, thick}
]

% Nodes
\node[state] (node1A) {$(\Lightning, 1)_1$};
\node[state, below=of node1A] (node1B) {($\checkmark, 0)_1$};
\node[state, right=of node1A, xshift=1cm] (node2A) {$(\Lightning, 1)_2$};
\node[state, below=of node2A] (node2B) {$(\checkmark, 0)_2$};
\node[state, below=of node2B] (node2C) {$(\checkmark, 0)_2$};
\node[state, below=of node2C] (node2D) {$(\checkmark, 0)_2$};
\node[state, right=of node2A, xshift = 1.5cm] (node3A) {$(\Lightning, 0)_3$};
\node[state, below=of node3A] (node3B) {$(\checkmark, 0)_3$};
\node[state, below=of node3B] (node3C) {$(\checkmark, 0)_3$};
\node[state, below=of node3C] (node3D) {$(\checkmark, 0)_3$};
\node[state, below=of node3D] (node3E) {$(\checkmark, 0)_3$};
\node[state, below=of node3E] (node3F) {$(\checkmark, 0)_3$};
\node[state, below=of node3F] (node3G) {$(\checkmark, 0)_3$};
\node[state, below=of node3G] (node3H) {$(\checkmark, 0)_3$};

\node[right=of node3B, xshift=-1.8cm, yshift=-0.8cm] (node4B) {};

% Arrows
\draw[arrow] (node1A) to node[midway, above] {\small 0.5} (node2A);
\draw[arrow] (node1A) to node[midway, below] {\small 0.5} (node2B);

\draw[arrow] (node1B) to node[midway, above] {\small 0.5} (node2C);
\draw[arrow] (node1B) to node[midway, below] {\small 0.5} (node2D);

\draw[arrow] (node2A) to node[midway, above] {\small 0.5} (node3A);
\draw[arrow] (node2A) to node[midway, below] {\small 0.5} (node3B);

\draw[arrow] (node2B) to node[midway, above] {\small 0.5} (node3C);
\draw[arrow] (node2B) to node[midway, below] {\small 0.5} (node3D);

\draw[arrow] (node2C) to node[midway, above] {\small 0.5} (node3E);
\draw[arrow] (node2C) to node[midway, below] {\small 0.5} (node3F);

\draw[arrow] (node2D) to node[midway, above] {\small 0.5} (node3G);
\draw[arrow] (node2D) to node[midway, below] {\small 0.5} (node3H);
\end{tikzpicture}
}
  \caption{Markov chain of Figure~\ref{fig:DMR_after_3} for Example~\ref{example:DMR-N-variable}.}
  \label{fig:markov-complete-Example2}
\end{figure}

To efficiently represent the Markov chain, state reduction is
needed. One observation of the above example is that $J_2$ has only
two unique states $(\mbox{\Lightning}, 1)_2$ and $(\checkmark, 0)_2$ and $J_3$
has only two unique states $(\mbox{\Lightning}, 0)_3$ and $(\checkmark, 0)_3$.
Therefore, there are only two states for $J_1$, namely $s_{1,1}$ and
$s_{1,2}$, two states for $J_2$, namely $s_{2,1}$ and $s_{2,2}$, and
two states for $J_3$, namely $s_{3,1}$ and $s_{3,2}$. Furthermore,
since the two states of $J_3$ indicate that there is no workload of
$J_3$ being executed after time $12$, the execution behavior of $J_4$
is memoryless as there is no impact from
$J_3$. Figure~\ref{fig:markov-reduction-Example2} illustrates the
Markov chain (with
$S=\setof{s_{1,1}, s_{1,2}, s_{2,1}, s_{2,2}, s_{3,1}, s_{3,2}}$) of
Example~\ref{example:DMR-N-variable} for any arbitrary number of
iterations. The initial distribution of the Markov chain is
$\lambda_{s_{1,1}} = \lambda_{s_{1,2}}= 0.5$ and $\lambda_{s}= 0$ for
\mbox{$s \notin \setof{s_{1,1}, s_{1,2}}$.} The transition matrix $P$ is
\begin{equation}
  \label{eq:P-matrix-example}
P = \begin{pmatrix}
  0   & 0 & 0   & 0 & 0.5 & 0.5 \\
  0   & 0 & 0   & 0 & 0.5 & 0.5 \\
  0.5 & 0 & 0   & 0 & 0   & 0 \\
  0.5 & 1 & 0   & 0 & 0   & 0 \\
  0   & 0 & 0.5 & 0 & 0   & 0 \\
  0   & 0 & 0.5 & 1 & 0   & 0 
\end{pmatrix}
\end{equation}

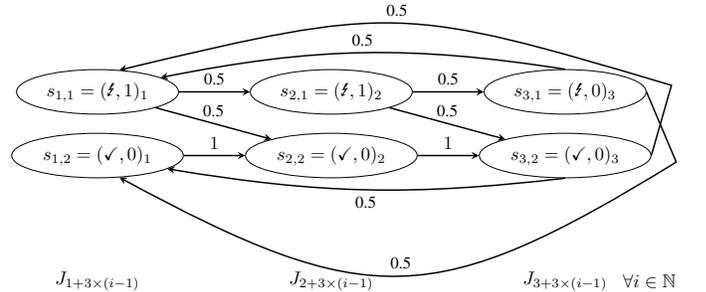
\begin{figure}[t]
  \centering
\scalebox{0.7}{

\begin{tikzpicture}[
  node distance = 0.35cm, auto, >=stealth,
  state/.style={ellipse, draw, minimum width=50pt},
  arrow/.style={->, thick}
]

% Nodes
\node[state] (node1A) {$s_{1,1} =(\Lightning, 1)_1$};
\node[state, below=of node1A] (node1B) {$s_{1,2}=(\checkmark, 0)_1$};
\node[state, right=of node1A, xshift=1cm] (node2A) {$s_{2,1}=(\Lightning, 1)_2$};
\node[state, below=of node2A] (node2B) {$s_{2,2}=(\checkmark, 0)_2$};
\node[state, right=of node2A, xshift=1cm] (node3A) {$s_{3,1}=(\Lightning, 0)_3$};
\node[state, below=of node3A] (node3B) {$s_{3,2}=(\checkmark, 0)_3$};

\node[below=of node1B, yshift=-1.3cm] (nodeJ1) {$J_{1+3\times (i-1)}$};
\node[below=of node2B, yshift=-1.3cm] (nodeJ2) {$J_{2+3\times (i-1)}$};
\node[below=of node3B, yshift=-1.3cm] (nodeJ3) {$J_{3+3\times (i-1)}$};
\node[right=of nodeJ3, xshift=-0.3cm] (nodeJ4) {$\forall i \in \Nbb$};

\node[right=of node3A] (node4A) {};
\node[right=of node3B] (node4B) {};

% Arrows
\draw[arrow] (node1A) to node[midway, above] {\small 0.5} (node2A);
\draw[arrow] (node1A) to node[midway, above] {\small 0.5} (node2B);

\draw[arrow] (node1B) to node[midway, above] {$1$} (node2B);

\draw[arrow] (node2A) to node[midway, above] {\small 0.5} (node3A);
\draw[arrow] (node2A) to node[midway, above] {\small 0.5} (node3B);

\draw[arrow] (node2B) to node[midway, above] {$1$} (node3B);

\draw[arrow] (node3A.north) to [bend right=10] node [midway, above]  {\small 0.5} (node1A);
\draw[arrow] (node3A.east) to (node4B.south) to [bend left=30, looseness=1.3] node[midway, above]  {\small 0.5} (node1B.315);

\draw[arrow] (node3B.east) to  (node4A.north) to [bend right=15, looseness=1.3] node [midway, above]  {\small 0.5} (node1A.45);
\draw[arrow] (node3B.south) to [bend left=8] node [midway, below]  {\small 0.5} (node1B);
\end{tikzpicture}
}
\caption{Finite Markov chain of an infinite amount of jobs for
  Example~\ref{example:DMR-N-variable}.}
  \label{fig:markov-reduction-Example2}
\end{figure}

\section{Convergence and Ergodicity}
\label{sec:convergence-ergodicity}

The convergence of the deadline miss rate can be traced back to a property called \emph{ergodicity} of the Markov chain.
% Intuitively, a Markov chain is ergodic if every state can be reached from every other state in finite time with non-zero probability. 
Intuitively, ergodicity means that the ratio of visits of a state in the long run is described by the stationary distribution.
Ergodicity and its relation to limiting behavior of Markov chains has intensely been studied in the literature, e.g.,~\cite{douc2018markov,hernandez2012markov,norris_1997}.
In this work we mostly follow the notation of Norris~\cite[Chapter~1]{norris_1997}.

We now discuss the properties a Markov chain must satisfy to utilize ergodic theory.
%To introduce the ergodic theory, we need to define properties that must be satisfied for a Markov chain to use ergodic theory.
%In the following 
Let $X_{\bullet} = (X_n)_{n\in \Nbb}$ be a Markov chain and let $P$ be the corresponding stochastic matrix.

\begin{defn}[Irreducible]
  Two states $r,s \in S$ \emph{communicate with each other} in $X_{\bullet}$ if they are reachable from one another with positive probability. More formally, $r$ and $s$ communicate with each other if there exist two sequences of states $r_1, \dots, r_\xi \in S$ and $s_1,\dots, s_\psi \in S$ such that:
  \begin{align}
    P_{r_1, r}, P_{r_2,r_1}, \dots, P_{r_{\xi}, r_{\xi-1}}, P_{s, r_{\xi}} &>0\\
    P_{s_1, s}, P_{s_2,s_1}, \dots, P_{s_{\psi}, s_{\psi-1}}, P_{r, s_{\psi}} &>0
  \end{align}

  $X_{\bullet}$ is called \emph{irreducible} if all states in $S$ communicate with each other in $X_{\bullet}$.
  % only $S'=\emptyset$ and $S'=S$ fulfill the property 
  %   \begin{equation}
  %     X_n \in S' \Rightarrow X_{n+1} \in S'
  %   \end{equation}
  \qed
\end{defn}
Intuitively, all nodes in the graph describing the Markov chain are connected by paths of non-zero probability.

\begin{defn}[Positive recurrent]
  We say that $X_{\bullet}$ is \emph{recurrent}, if for all states $s\in S$ 
  \begin{equation}
    \Pbb(X_n=s \text{ for infinitely many } n \in \Nbb\,|\,X_1=s) = 1
  \end{equation}
  holds. Intuitively, each state is infinitely many times visited.
  Let $T_s := \inf\set{n\geq 2}{X_n=s}$ be the first passage time\footnote{The first passage time is a random variable that describes how long it takes until the state $s$ is reached after the initial state. We assume in that definition that $\inf \emptyset = \infty$.} of state $s\in S$.
  The Markov chain $X_{\bullet}$ is \emph{positive recurrent} if $\Ebb[T_s | X_1 = s]<\infty$ for all $s\in S$.
  \qed
\end{defn}

% Since this condition is quite unhandy, equivalent descriptions have been provided in the literature.
% We refer forward to Lemma~\ref{lem:equival_pos_rec} for an equivalent description as proven by Norris~\cite[Theorem~1.7.7]{norris_1997}

If $X_{\bullet}$ is positive recurrent, then %this means that 
each state is expected to be visited again in finite time.
Equivalent descriptions of positive recurrent have been provided in the literature, e.g.,~\cite[Theorem~1.7.7]{norris_1997}.
%
% \begin{defn}[Ergodic]
%   A Markov chain $X_{\bullet}$ is \emph{ergodic}, if $X_{\bullet}$ is
%   irreducible and positive recurrent.
%   \qed
% \end{defn}

If a Markov chain is irreducible and positive recurrent, then its ratio of visits in the long run is described by the \emph{stationary distribution}.

\begin{defn}[Stationary distribution]
  A probability distribution $\pi = (\pi_s)_{s\in S}$ with $\sum_{s\in S}\pi_s=1$ is \emph{stationary} if 
  \begin{equation}
    \label{eq:stationary-distribution}
    P\pi = \pi.
  \end{equation}
In the literature, the terms \emph{invariant} or \emph{equilibrium} are used
equivalently for the stationary distribution.
   \qed
\end{defn}

If the Markov chain is finite, a stationary distribution can be calculated by solving the linear system $(P-E)\pi = 0$ (where $E$ is the identity matrix with $1$ on the diagonal and $0$ else) and normalizing $\pi$ (that is, setting $\pi$ to $\frac{1}{\sum_{s\in S} \pi_s} \pi$).

The following theorem ensures the limiting behavior as well as the existence and uniqueness of the stationary distribution.
%It is a reformulation of Theorem~1.10.2 in \cite{norris_1997}. %\cite[Theorem~1.10.2]{norris_1997}.

\begin{thm}[Ergodic Theorem. Reformulated from~{\cite[Theorem~1.10.2]{norris_1997}}]\label{thm:ergodic_norris}
  Consider a Markov chain $X_{\bullet} = (X_n)_{n\in \Nbb}$ with transition matrix $P$.
  Let $f: S \to \Rbb$ be a bounded function and $\lambda$ any initial distribution.
  If $X_{\bullet}$ is irreducible and positive recurrent, then a unique invariant distribution $\pi$ exists, and 
  \begin{equation}
    \Pbb\left(
      \lim_{N\to \infty} \frac{1}{N} \sum_{j=1}^{N} f(X_j) = \sum_{s\in S} \pi_s f(s)
    \right) = 1.
  \end{equation}
\end{thm}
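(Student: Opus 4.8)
The plan is to follow the classical route to the ergodic theorem, reducing the convergence of time averages to a strong law of large numbers for i.i.d.\ excursion lengths. First I would settle the existence and uniqueness of $\pi$: for an irreducible positive recurrent chain the stationary distribution is unique and given explicitly by $\pi_s = 1/\Ebb[T_s \mid X_1 = s]$, the reciprocal of the mean return time, which is finite precisely by positive recurrence. This is exactly the content already cited as Theorem~1.7.7 in~\cite{norris_1997}, so I would invoke it directly rather than reprove it.

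The analytic heart is a visit-frequency law. Fix a reference state $s \in S$ and let $V_s(N) = \sum_{j=1}^{N} \mathbf{1}_{\{X_j = s\}}$ count the visits to $s$ up to time $N$. By the strong Markov property the chain regenerates at each visit to $s$, so the excursion lengths between consecutive visits to $s$ are independent and identically distributed, each with mean $\Ebb[T_s \mid X_1 = s] = 1/\pi_s$. Since the time of the $k$-th visit is a sum of $k$ such i.i.d.\ lengths, the ordinary strong law of large numbers yields $V_s(N)/N \to \pi_s$ almost surely. Irreducibility guarantees that $s$ is reached from the support of any initial distribution $\lambda$ and is then visited infinitely often, so this almost-sure convergence holds for every $\lambda$, as required by the statement.

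It remains to pass from single states to an arbitrary bounded $f$. Writing the time average as $\frac{1}{N}\sum_{j=1}^{N} f(X_j) = \sum_{s \in S} f(s)\, V_s(N)/N$, the conclusion follows immediately by linearity from the visit-frequency law when $S$ is finite --- which is the only case needed in this paper, where the constructed Markov chains are finite. For a general countable $S$ one additionally truncates: given $\epsilon > 0$, pick a finite $F \subseteq S$ with $\sum_{s \notin F}\pi_s < \epsilon$, treat the sum over $F$ by the finite argument, and bound the tail by $\bigl(\sup_{s}|f(s)|\bigr)\bigl(1 - \sum_{s \in F} V_s(N)/N\bigr)$, letting $N \to \infty$ and then $\epsilon \to 0$.

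The main obstacle is precisely this last interchange of the limit with an infinite sum over states: one must dominate the tail contribution uniformly in $N$. This is where boundedness of $f$ together with the fact that $\pi$ is a genuine probability distribution of total mass one are indispensable, since they make the tail mass arbitrarily small and hence controllable. In the finite-state setting of our application this difficulty vanishes and the theorem reduces cleanly to the strong law of large numbers plus linearity.
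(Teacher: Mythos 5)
The paper offers no proof of this theorem at all---it is restated directly from \cite[Theorem~1.10.2]{norris_1997}---and the proof in that cited source is precisely the argument you sketch: existence and uniqueness of $\pi$ via mean return times ($\pi_s = 1/\Ebb[T_s \mid X_1 = s]$), the visit-frequency law $V_s(N)/N \to \pi_s$ obtained from regeneration at successive visits plus the strong law of large numbers, and the truncation over a finite $F \subseteq S$ to pass to a bounded $f$ on a countable state space. Your proposal is therefore correct and coincides with the proof the paper defers to; the only minor imprecision is that almost-sure arrival at the reference state $s$ from an arbitrary initial distribution relies on recurrence rather than irreducibility alone (both hold under the stated hypotheses), and that the first excursion, whose length is not identically distributed with the later ones, should be dispatched by noting it is almost surely finite.
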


By choosing the function $f$ to count the number of deadline misses, we can use the ergodic theorem to calculate the deadline miss rate.

\begin{thm}[Deadline miss rate]\label{thm:DMR}
  If $X_{\bullet}$ is irreducible and positive recurrent, then the unique invariant distribution $\pi$ exists, and 
  \begin{equation}
    \Pbb\left(
      \lim_{N\to \infty} \DMR_N = \sum_{s\in S_{\DM}} \pi_s
    \right) = 1,
  \end{equation}
  where $S_{\DM} \subseteq S$ are the states that indicate a deadline miss.
  Therefore, 
  \begin{math}
    \DMR = \sum_{s\in S_{\DM}} \pi_s.
  \end{math}
\end{thm}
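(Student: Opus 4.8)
The plan is to derive Theorem~\ref{thm:DMR} as a direct specialization of the Ergodic Theorem (Theorem~\ref{thm:ergodic_norris}). The existence and uniqueness of the invariant distribution $\pi$ is already guaranteed by the hypotheses that $X_{\bullet}$ is irreducible and positive recurrent, so the only work left is to identify the right bounded observable $f$ and to evaluate the resulting space average $\sum_{s\in S}\pi_s f(s)$.

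First I would choose $f:S\to\Rbb$ to be the indicator of the deadline-miss states, i.e.\ $f(s)=1$ for $s\in S_{\DM}$ and $f(s)=0$ otherwise. This $f$ is bounded by $1$ irrespective of whether $S$ is finite or countably infinite, so the hypothesis of Theorem~\ref{thm:ergodic_norris} on $f$ is met. The central observation is that, by the way the Markov chain is constructed in Section~\ref{sec:markov}, the first coordinate of the state $X_j$ records exactly whether job $J_j$ misses its deadline (\Lightning{} versus \checkmark{}), and $S_{\DM}$ is precisely the set of states whose first coordinate is \Lightning{}. Hence the equality of random variables $f(X_j)=\DM(j)$ holds pathwise for every $j\in\Nbb$, and therefore
\begin{equation*}
  \DMR_N = \frac{1}{N}\sum_{j=1}^{N}\DM(j) = \frac{1}{N}\sum_{j=1}^{N} f(X_j)
\end{equation*}
as random variables.

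Applying Theorem~\ref{thm:ergodic_norris} to this $f$ and the given initial distribution $\lambda$ yields $\Pbb\big(\lim_{N\to\infty}\frac{1}{N}\sum_{j=1}^N f(X_j)=\sum_{s\in S}\pi_s f(s)\big)=1$. Since $f$ is an indicator, the space average collapses to $\sum_{s\in S}\pi_s f(s)=\sum_{s\in S_{\DM}}\pi_s$, and substituting the pathwise identity from the previous step gives $\Pbb(\lim_{N\to\infty}\DMR_N=\sum_{s\in S_{\DM}}\pi_s)=1$, which is the claim; the value $\DMR=\sum_{s\in S_{\DM}}\pi_s$ then follows from the definition of $\DMR$ in Eq.~\eqref{eq:DMR_lim_our}.

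The step I expect to require the most care is the pathwise identification $f(X_j)=\DM(j)$. This is not merely a statement about distributions: the ergodic theorem produces an almost-sure limit of the \emph{sample} average $\frac{1}{N}\sum_{j=1}^N f(X_j)$, so to equate this with $\DMR_N$ I must argue that the encoded deadline-miss flag in the state agrees with the deadline-miss indicator realization-by-realization. This is exactly where the faithfulness of the state-reduction construction matters, namely that collapsing states never conflates a deadline-miss job with a non-miss job. Once the state abstraction is shown to respect the deadline-miss event, the remainder of the argument is a mechanical substitution into Theorem~\ref{thm:ergodic_norris}.
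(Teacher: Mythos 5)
Your proposal is correct and follows essentially the same route as the paper's own proof: choose $f$ as the indicator of the deadline-miss states $S_{\DM}$, identify $f(X_j)$ with $\DM(j)$, apply Theorem~\ref{thm:ergodic_norris}, and collapse the space average to $\sum_{s\in S_{\DM}}\pi_s$. Your closing remark that the pathwise identity $f(X_j)=\DM(j)$ is the delicate point is apt, and it is precisely the faithfulness property the paper defers to the chain-construction sections (cf.\ Lemma~\ref{lemma:MarkovProperty-SupplyFunctions}).
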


\begin{proof}
  We choose the function $f : S\to \Rbb$ as $f(s) = 1$ if state $s$ indicates a deadline miss and $f(s) = 0$ if it indicates no deadline miss.
  In that case, $f(X_j)$ from Theorem~\ref{thm:ergodic_norris} is the same as $\DM(j)$, and $\sum_{s\in S} \pi_s f(s)$ can be simplified to $\sum_{s\in S_{\DM}} \pi_s$.
  By definition, $\DMR = \sum_{s\in S_{\DM}} \pi_s$.
\end{proof}

For our case, this means that if we can ensure that the Markov chain $X_{\bullet}$ is irreducible and positive recurrent, then we can calculate the deadline miss rate $\sum_{s\in S_{\DM}} \pi_s$ by calculating the stationary distribution $\pi$.

For finite Markov chains $X_{\bullet}$ (i.e., %when there are 
for chains with finitely many states $S$) %there are efficient algorithms to determine if $X_{\bullet}$ is irreducible. 
whether $X_{\bullet}$ is irreducible can be determined efficiently. As all transitions are with non-zero probability, testing whether  $X_{\bullet}$ is irreducible is equivalent to testing whether the directed graph (Markov chain) is strongly connected, i.e., there is a path from every vertex to every other vertex. Tarjan's strong connected components algorithm~\cite{DBLP:journals/siamcomp/Tarjan72,DBLP:journals/ipl/NuutilaS94} solves this problem %and 
% can be applied for this purpose and it requires 
in linear time (with respect to the number of vertices and directed edges). If there is only one strongly connected component, then the Markov chain is irreducible.  

If the Markov chain is finite and irreducible then it is always positive recurrent and we do not need to check it by hand.
This leads to the following result for the finite case (cf.~\cite{wilmer2009markov}).

\begin{cor}[Deadline miss rate, finite case]\label{cor:DMR_finite}
  If $X_{\bullet}$ is finite and irreducible, then a unique invariant distribution $\pi$ exists.
  Moreover, it holds 
  \begin{equation}
    \Pbb\left(
      \lim_{N\to \infty} \DMR_N = \sum_{s\in S_{\DM}} \pi_s
    \right) = 1
  \end{equation}
  and therefore
  \begin{equation}
    \label{eq:DRM-pi}
    \DMR = \sum_{s\in S_{\DM}} \pi_s.
  \end{equation}
\end{cor}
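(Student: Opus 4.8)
The plan is to reduce Corollary~\ref{cor:DMR_finite} entirely to Theorem~\ref{thm:DMR}. That theorem already supplies the existence and uniqueness of the invariant distribution $\pi$ together with the almost-sure convergence $\Pbb(\lim_{N\to\infty}\DMR_N = \sum_{s\in S_{\DM}}\pi_s) = 1$ under the two hypotheses \emph{irreducible} and \emph{positive recurrent}. Since irreducibility is assumed directly in the corollary, the only gap to close is the implication \emph{finite and irreducible $\Rightarrow$ positive recurrent}. Once this is established, the corollary follows by invoking Theorem~\ref{thm:DMR} verbatim; the indicator function $f(s)=1$ for $s\in S_{\DM}$ and $f(s)=0$ otherwise is automatically bounded, since $S$ is finite.

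First I would show that every state of a finite irreducible chain is recurrent. The argument is by contradiction: a transient state is visited only finitely often almost surely, so if all states of $S$ were transient, then summing over the finitely many states would make the total number of visited time steps almost surely finite, contradicting that $(X_n)_{n\in\Nbb}$ is defined for every $n\in\Nbb$. Hence at least one state is recurrent, and since irreducibility makes recurrence a class property, all states of $S$ are recurrent.

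Next I would upgrade recurrence to positive recurrence, which is the step where finiteness is genuinely needed. A finite stochastic matrix $P$ always has $1$ as an eigenvalue with a nonnegative eigenvector; normalizing it yields a stationary distribution, and irreducibility forces this distribution to be strictly positive and unique. By the standard equivalence between the existence of an invariant distribution and positive recurrence for irreducible chains (cf.~\cite[Theorem~1.7.7]{norris_1997}), the expected return times $\Ebb[T_s\mid X_1=s]$ are then finite for every $s\in S$, so $X_{\bullet}$ is positive recurrent.

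The main obstacle is precisely this recurrence-to-positive-recurrence step. For a general infinite irreducible chain, recurrence does not entail a finite expected return time, as null-recurrent chains exist; it is the finiteness of $S$ that rules this out and lets me conclude positive recurrence. With positive recurrence secured, Theorem~\ref{thm:DMR} applies directly and yields both the unique invariant distribution $\pi$ and the stated almost-sure convergence, which is exactly Eq.~(\ref{eq:DRM-pi}).
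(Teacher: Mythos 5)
Your proposal is correct and takes essentially the same route as the paper: the paper likewise reduces the corollary to Theorem~\ref{thm:DMR} by observing that a finite irreducible chain is automatically positive recurrent, a fact it cites from the literature (cf.\ the text preceding the corollary) rather than proving. Your recurrence-by-contradiction and Perron--Frobenius/Norris~1.7.7 argument merely fills in the details of that cited fact, so the two proofs are structurally identical.
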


% % \mario{@kuan refer to the package you use here}.
% Instead of checking by hand if $X_{\bullet}$ is positive recurrent, we can utilize the following result.

% \begin{lem}[Reformulated from~{\cite[Theorem~1.7.7]{norris_1997}}]\label{lem:equival_pos_rec}
%   Let $X_{\bullet}$ be irreducible.
%   The following statements are equivalent:
%   \begin{enumerate}
%     \item $X_{\bullet}$ is positive recurrent.
%     \item An invariant probability distribution $\pi$ exists.
%   \end{enumerate}
% \end{lem}

The procedure to calculate the deadline miss rate when we have a finite Markov chain $X_{\bullet}$ is given by Algorithm~\ref{alg:procedure}.

% \mario{For the finite case there is an even stronger result.
% \\LEMMA: Every finite irreducible Markov is positive recurrent. 
% \\$\to$ need just a good reference for that.
% }

\begin{algorithm}[t]
	\caption{Compute DMR.}
	 \footnotesize
	\label{alg:procedure}
	\begin{algorithmic}[1]
    \State \textbf{Input}: Stochastic matrix $P$ of finite Markov chain $X_{\bullet}$
    \State \textbf{Output:} $\DMR$ or $\mathit{None}$
    \State
    \State Check if $X_{\bullet}$ is irreducible.
    \If{$X_{\bullet}$ is not irreducible}
      \State \Return $\mathit{None}$
    \EndIf
    \State Calculate solution of $P\pi=\pi$ with $\sum_{s\in S} {\pi_s} = 1$.
    % \If{$\pi$ does not exist}
    %   \State \Return $\mathit{None}$
    % \EndIf
    \State $\DMR := \sum_{s\in S_{\DM}} \pi_s$
    \State \Return $\DMR$
	\end{algorithmic}
\end{algorithm}

\begin{exmpl}
  \label{example:missrate-ergodic-stationary}
  Consider the Markov chain in
  Figure~\ref{fig:markov-reduction-Example2} for executing infinitely
  many jobs of $\tau$ in Example~\ref{example:DMR-N-variable}.
  This Markov chain is irreducible by observation since each node is reachable from every other node by a path of edges with positive probability.
  The set of states that indicate deadline misses is $S_{DM}= \setof{s_{1,1}, s_{2,1}, s_{3,1}}$. The
  stationary distribution $\pi$ such that $P\pi=\pi$ is
  $(\pi_{1,1}, \pi_{1,2}, \pi_{2,1}, \pi_{2,2}, \pi_{3,1}, \pi_{3,2}) = (4/24, 4/24, 2/24, 6/24, 1/24, 7/24)$. 
  With Corollary~\ref{cor:DMR_finite}, we obtain that the deadline miss rate is
  $\pi_{s_{1,1}} + \pi_{s_{2,1}} + \pi_{s_{3,1}}=7/24\approx 29.2\%$.  \qed
\end{exmpl}

\begin{exmpl}
  \label{example:missrate-ergodic-stationary-different-probability}
  We revise Example~\ref{example:DMR-N-variable} by setting 
  \begin{itemize}
  \item $\Pbb(C_j = 2) = p$ and $\Pbb(C_j = 3) = 1-p$,
    $\forall j \in \Nbb$
  \end{itemize}
  for some $0 < p < 1$.  We can construct the corresponding Markov
  chain, whose states are identical to
  Figure~\ref{fig:markov-reduction-Example2}, but with different
  initial state probabilities and different transition matrix $P$.
  We confirm that the finite Markov chain is irreducible.
  Hence, we can use the ergodic theory for all of them to calculate the DMR. We illustrate the corresponding DMR by
  Theorem~\ref{thm:DMR} in Figure~\ref{fig:missrate-different-prob-example} for $p=0.01$ to $p=0.99$ with step $0.01$. \qed
\end{exmpl}

\begin{figure}
  \centering
  \includegraphics[width=0.75\columnwidth]{figures/ProbabilityDelta1}
  \caption{DMR of $\tau$ when $\Pbb(C_j = 2) = p$ and
    $\Pbb(C_j = 3) = 1-p$, $\forall j \in \Nbb$ in
    Example~\ref{example:missrate-ergodic-stationary-different-probability}.}
  \label{fig:missrate-different-prob-example}
\end{figure}

% \begin{figure}[t]
% 	\centering
% 	\subfloat[DMR with supply function]{
% 		\includegraphics[width=0.5\columnwidth]{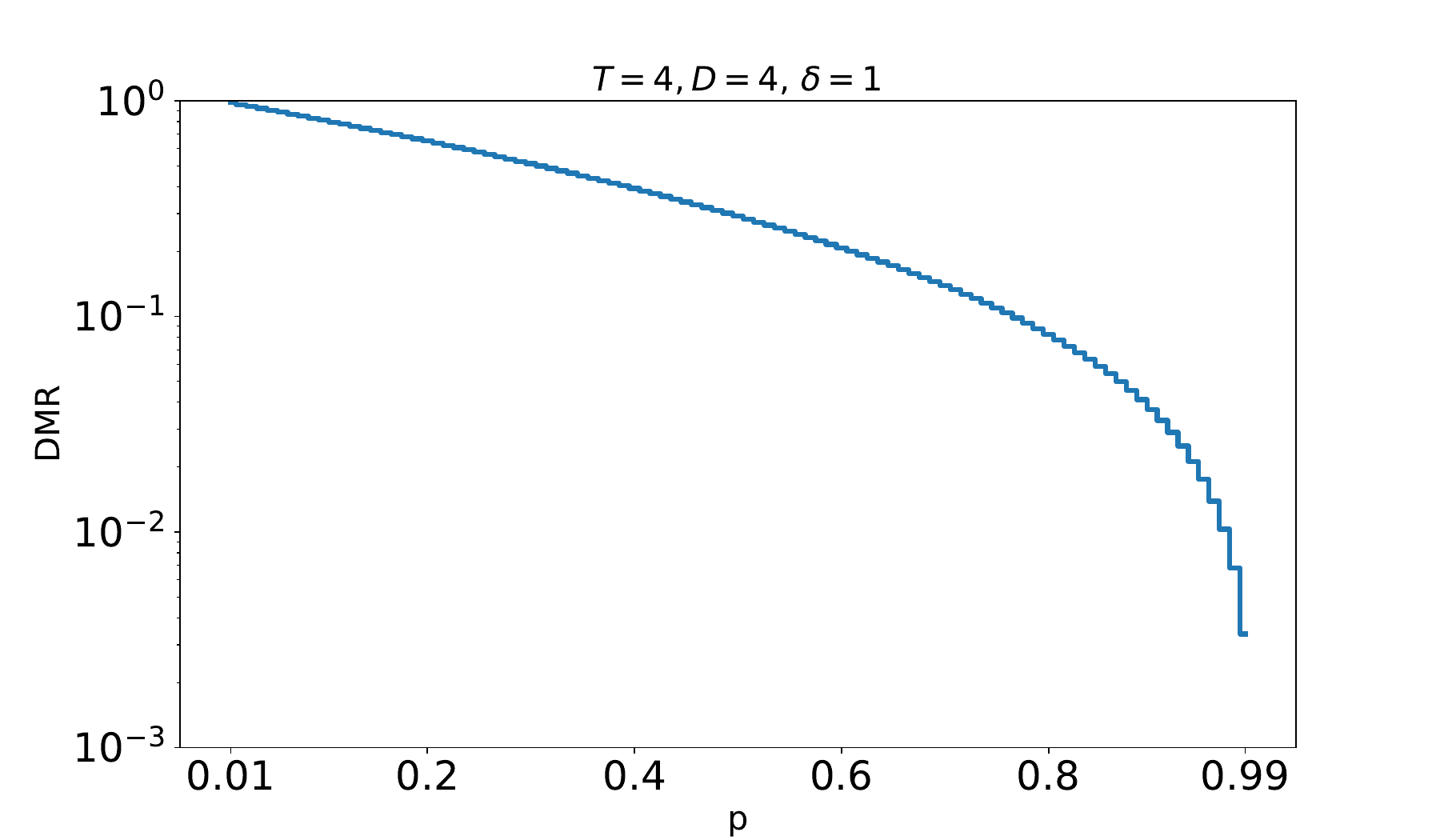}
% 	}
% 	\subfloat[DMR with supply bound function]{
% 		\includegraphics[width=0.5\columnwidth]{}
% 	}
% 	\caption{DMR of $\tau$ when $\Pbb(C_j = 2) = p$ and
% 		$\Pbb(C_j = 3) = 1-p$ , $\forall J \in \Nbb$ in Example~\ref{example:Markov-chain-supply-bound-functions}.}
% 	\label{fig:Markov-chain-supply-bound-functionsoverProb}
% \end{figure}

\section{DMR under Supply Functions}
\label{sec:DMR-supply-function}

We present the construction of the corresponding Markov chain in
Section~\ref{sec:algorithm-supply-functions} to capture the execution
of the task $\tau$ served by a GPC with a supply function $\beta_j(t)$
$\forall j \in \Nbb$ and analyze its correctness in
Section~\ref{sec:analysis-supply-functions}.  

The accumulative service
provided by the GPC starting from $(j-1)T$ to $(j-1)T+t$ is denoted as
$service(j, t)$:\footnote{We define$\sum_{\ell=j}^{j-1}  \beta_\ell(T)$ as $0$
  for notional brevity; otherwise, if $\floor{\frac{t}{T}}$ is $0$, 
  $\sum_{\ell=j}^{j+\floor{\frac{t}{T}}-1} \beta_\ell(T) =
  \sum_{\ell=j}^{j-1}  \beta_\ell(T)$ is mathematically undefined. }
\begin{equation}
  service(j, t) = \sum_{\ell=j}^{j+\floor{\frac{t}{T}}-1} \beta_\ell(T) +
  \beta_{j+\floor{\frac{t}{T}}}\left(t-T\floor{\frac{t}{T}}\right)
\end{equation}
%\mario{I changed the index of the second supply function. Check if it is congruent now with the remaining} \jj{the second case does not work for $t \leq T$ as this may result in $\sum_{\ell=j}^{j-1}...$.}
% \begin{align}
%   service(j, t) = \left\{
%   \begin{array}{ll}
%     \beta_{j}(t), \qquad\qquad\qquad\qquad\qquad\;\;\;\; \mbox{if } t \leq T\\
%     \sum_{\ell=j}^{j+\floor{\frac{t}{T}}-1} \beta_\ell(T) +
%     \beta_{j+\floor{\frac{t}{T}}}(t-T\floor{\frac{t}{T}}), \\
%     \qquad\qquad\qquad\qquad\qquad\qquad\;\;\;\;\;\mbox{otherwise.}
%   \end{array}\right.
% \end{align}

\subsection{Algorithm to Construct $X_{\bullet}$}
\label{sec:algorithm-supply-functions}
\noindent\textbf{Initialization:} For each realization $e_k,~k=1,2,\ldots,h$ of
the random variable $C_1$ for the first job $J_1$ of $\tau$, one of
the following two cases holds:
\begin{itemize}
\item $J_1$ \uline{meets} its deadline for a realization $e_k$ of
  $C_1$ if \mbox{$e_k \leq service(1, D)$}. This realization $e_k$ of
  $C_1$ has
  \begin{equation}
    \label{eq:remaining-w1h}
  \max\{e_k - \beta_1(T), 0\}        
  \end{equation}
  remaining execution time that will be executed
  in the interval of $[T, \max\{D, T\})$.

\item $J_1$ \uline{misses} its deadline for a realization $e_k$ of
  $C_1$ if \mbox{$e_k > service(1, D)$}. Some of the remaining
  execution time $\max\{e_k - \beta_1(T), 0\}$ is dismissed after its
  dismiss point $D+\delta$ when $e_k > service(1, D+\delta)$ and $D+\delta > T$.
  Therefore,
  \begin{equation}
    \label{eq:remaining-w1h-dismissed}
   \max\{\min\{e_k,service(1, D+\delta)\} - \beta_1(T)\}, 0\}        
  \end{equation}
  remaining execution time will be executed in the interval of
  $[T, \max\{D+\delta, T\})$.
\end{itemize}

For any two realizations of $J_1$, if they have the same remaining
execution time %and either both indicate a deadline miss or both indicate a deadline hit, 
then
they have the same impact on %the execution of 
the subsequent jobs
$J_2, J_3, \ldots$.  Each state $(\{\checkmark, \Lightning\},w)_j$ of $J_j$
consists of two entries, indicating whether it is a deadline hit
$\checkmark$ and its remaining execution time is given by
Eq.~(\ref{eq:remaining-w1h}) or deadline miss $\Lightning$ and its
remaining execution time is given by Eq.~(\ref{eq:remaining-w1h-dismissed}).
If two states have the same deadline miss/hit indicator and the same remaining execution time they are merged to one state.
Among the $h$ realizations of $C_1$ for $J_1$, suppose that there are
$K_1 \leq h$ different states of $J_1$, denoted as
$s_{1,1}, s_{1,2}, \ldots, s_{1,K_1}$ with their corresponding initial
probabilities
$\lambda_{s_{1,1}}, \lambda_{s_{1,2}}, \ldots, \lambda_{s_{1,K_1}}$.

In the following, we assume that the supply functions repeat after every $Q \in \mathbb{N}$ jobs and 
 discuss the state expansion (i) for the first $Q$ jobs and (ii) for the subsequent jobs. %, %where we assume 
%under the assumption that the supply functions repeat after every $Q \in \mathbb{N}$ jobs. 
This constructions of the states is achieved in a memoryless manner.

\noindent\textbf{State Expansions Up to $J_Q$:}
Let $j \in  \setof{1, 2, \ldots, Q-1}$.
For a given realization
$(e_{k_1}, e_{k_2}, \ldots, e_{k_j})$ of $C_1, C_2, \ldots, C_j$ of task~$\tau$, suppose that $s_{j, \ell}$
indicates this corresponding realization of $X_j$ and that
$rem(s_{j,\ell})$ is the remaining execution time of the state $s_{j,\ell}$
(assuming that execution time is dismissed after dismiss points). 
%(after considering the execution time being dismissed after dismiss points).  
For each $e_k$ for $k=1,2,\ldots,h$ of the $h$
realizations of the random variable $C_{j+1}$ for the first job
$J_{j+1}$ of $\tau$, one of the following two cases holds:
\begin{itemize}
\item $J_{j+1}$ \uline{meets} its deadline for a realization $e_k$ of
  $C_{j+1}$ if \mbox{$e_k + rem(s_{j,\ell}) \leq service(j+1, D)$}. %If
  % $D < T$, there is no remaining execution time at $(j+1)T$. If
  % $D \geq T$, the remaining execution time at $(j+1)T$ is
  The remaining execution time at $(j+1)T$ is
  \begin{equation}
    \label{eq:remaining-hit-j-supply-function}
    \max\{e_k+rem(s_{j,\ell}) - \beta_{j+1}(T), 0\}.
  \end{equation}
  This is similar
  to Eq.~(\ref{eq:remaining-w1h}) by taking $rem(s_{j,\ell})$ into
  considerations.
\item $J_1$ \uline{misses} its deadline for a realization $e_k$ of
  $C_{j+1}$ if \mbox{$e_k + rem(s_{j,\ell})> service(j+1, D)$}. 
  If further $rem(s_{j,\ell})+e_k > service(j+1, D+\delta)$ then 
  all execution time that is not served until $jT+D+\delta$ is dismissed.
  % some of the remaining execution time
  % Some
  % of the remaining execution time
  % $\max\{rem(s_{j,\ell})+e_k - \beta_{j+1}(T), 0\}$ is dismissed after
  % its dismiss point $jT+D+\delta$ when
  % $rem(s_{j,\ell})+e_k > service(j+1, D+\delta)$ and $D+\delta > T$.
  %Therefore, 
  Thus, similar to Eq.~(\ref{eq:remaining-w1h-dismissed}), the
  remaining execution time is
  \begin{equation}
    \label{eq:remaining-miss-j-supply-function}    
    \max\left\{    
    \min\left\{
      \begin{array}{l}
        rem(s_{j,\ell})+e_k,\\
        service(j+1, D+\delta)   
      \end{array}
     \right\} - \beta_{j+1}(T),\, 0\right\}.            
  \end{equation}
\end{itemize}
Similarly, if two realizations of $C_{j+1}$ after $s_{j, \ell}$ are
identical, we can merge %these two 
them into one. Suppose that there are
$K_{j+1}$ distinct states after the merge, denoted as
$s_{j+1, 1}, s_{j+1,2}, \ldots, s_{j+1, K_{j+1}}$.  The transition
probability from state $s_{j,\ell}$ to state $s_{j+1,\ell^*}$ is by
definition
$\sum_{\set{k\in \setof{1, \dots, h}}{s_{j,\ell} \text{ transits to } s_{j+1,
\ell^*} \text{ when }C_{j+1}=e_k}} \Pbb(C_{j+1}=e_k)$.
% \mbox{$\sum_{k=1}^{h}\Pbb(s_{j,\ell} \mbox{ transits to } s_{j+1,
%     \ell^*} \mbox{ when }C_{j+1}=e_k)$}.

\noindent\textbf{State Expansions for $J_j = J_{Q+1}, J_{Q+2}, \ldots$:}
We now consider a given realization
$(e_{k_1}, e_{k_2}, \ldots, e_{k_j})$ for some $j \geq Q$. Since the
supply function repeats every $QT$ time units, there is no difference
of the execution behavior jobs $J_{j+1}$ and $J_{j\mod Q+1}$ if the
system has the same remaining execution time at time $jT$ and at time
$(j\mod Q)T$. %Therefore, 
Hence, \emph{what matters from the past} is only the
remaining execution time (after considering the dismiss points) and
\emph{what matters for the future} is the supply function, which %is
%repeated 
repeats every $Q$ jobs. As a result, we can directly reuse an
existing state $s_{(j \mod Q)+1,\ell^*}$ if this is identical to a
state after executing $J_{j+1}$.

Starting from $j=Q$, if there is a state $s_{j,\ell}$ without any
out-going transition, we first evaluate the $h$ realizations of
$J_{j+1}$ with a similar procedure as in the state expansion up to $J_Q$. 
Suppose that there are $B$ distinct realizations, denoted as
$s_1', s_2', \ldots, s_B'$.
The following actions are taken to create states from the realizations $s_1', s_2', \ldots, s_B'$ and merge them into the Markov chain structure.
% Similar to the state expansion up to $J_Q$ we first construct new (preliminary) states $s_1', s_2', \ldots, s_B'$.
% % Suppose that there are $B$ distinct states, denoted as
% % $s_1', s_2', \ldots, s_B'$ after this construction. 
% Afterwards, the following actions are taken to
% merge the preliminary states into the Markov chain: %.for $i=1,2,\ldots,B$:
\begin{itemize}
\item If there is no state of $s_{(j \mod Q)+1, \ell^*}$ which is
  completely identical to $s_i'$, then a new state
  $s_{(j \mod Q)+1, \ell^*}$ is created and used to represent
  $s_i'$. The transition probability from $s_{j,\ell}$ to
  $s_{(j \mod Q)+1, \ell^*}$ is that from $s_{j,\ell}$ to $s_i'$.
\item If there is a state of $s_{(j \mod Q)+1, \ell^*}$ which is
  completely identical to $s_i'$, then $s_{(j \mod Q)+1, \ell^*}$ is
  used to represent~$s_i'$. %; otherwise, a new state
  %$s_{(j \mod Q)+1, \ell^*}$ is created and used to represent $s_i'$.
  The transition probability from $s_{j,\ell}$ to
  $s_{(j \mod Q)+1, \ell^*}$ is that from $s_{j,\ell}$ to $s_i'$. We
  note that 
  % if $s_{(j \mod Q)+1, \ell^*}$ exists before this step,
  % then 
  the probability of outgoing transitions
  of $s_{(j \mod Q)+1, \ell^*}$
  remains as before since the
  transition of states is memoryless.
\end{itemize}

The above procedure repeats until there is no state $s_{j,\ell}$
without any out-going transition. Then, we further continue with $j$
as $(j \mod Q) +1$ until every state has out-going transitions.

\begin{exmpl}
  \label{example:DMR-example-D=6}
  %In this example, 
  We adopt Example~\ref{example:DMR-N-variable} with
  a small modification, %which sets
  setting $D=6$ and $\delta=0$. The resulting Markov
  chain %for this case 
  is illustrated in
  Figure~\ref{fig:markov-reduction-ExampleD=6}. 
  The construction of
  $s_{1,1}, s_{1,2}, s_{2,1}, s_{2,2}, s_{3,1}, s_{3,2}$ is only
  slightly different from the six states in
  Example~\ref{example:missrate-ergodic-stationary}, and the only difference is that now %we now observe that 
  all jobs represented by states $s_{1,1}, s_{2,1}, s_{3,1}$ hit their deadline since they finish their workload before the deadline.

  We start the demonstration of our algorithm by continuing the construction from $j=Q=3$, i.e., to
  evaluate the execution behavior of $J_4$. Since $rem(s_{3,2})$ is
  $0$, state $s_{3,2}$ has no impact on the schedule of $J_4$,
  resulting in transiting into the two existing states $s_{1,1}$ and
  $s_{1,2}$ of $J_1$ each with 50\% probability. We move on with $s_{3,1}$.
  If $C_4$ is $2$ (50\%
  probability), since $rem(s_{3,1})$ is $1$, $J_4$ can be finished
  before its deadline and has remaining execution time of $2+1-2=1$
  at time $4T$. This results in an identical state as $s_{1,1}$ and
  the transition probability from $s_{3,1}$ to $s_{1,1}$ is $50\%$. If
  $C_4$ is $3$, since $rem(s_{3,1})$ is $1$, $J_4$ can be finished
  before its deadline and has remaining execution time of $3+1-2=2$
  at time $4T$. As this is different from $s_{1,1}$ and $s_{1,2}$, a
  new state $s_{1,3}$ is created and the transition probability from
  $s_{3,1}$ to $s_{1,3}$ is $50\%$.

  We now move further with $j=(3 \mod Q) +1=1$, i.e., to evaluate the
  execution behavior of $J_5$. As $s_{1,3}$ is the only state without
  out-going transitions, we consider the impact of the two
  realizations of $C_5$, following $s_{1,3}$. This results in a
  transition to an existing state $s_{2,1}$ with $50\%$ probability
  when $C_5 = 2$. When $C_5=3$, job $J_5$ misses its deadline as two
  units of its execution time must be after $5T$ but the service
  provided by the GPC between $5T$ and $4T+D=5T+2$ in this example is
  only one unit of time. Therefore, $J_5$ misses its deadline and the
  remaining one unit of execution time is dismissed at time $5T+2$
  since $\delta$ is $0$. This results in a new state
  $s_{2,3} = (\Lightning, 1)_2$ with $50\%$ probability when $C_5 = 3$.

  We now go ahead and consider $j=2$, i.e., to evaluate the execution
  behavior of $J_6$. As $s_{2,3}$ is the only state without out-going
  transitions, we consider the impact of the two realizations of
  $C_6$, following $s_{2,3}$. This results in a transition to an
  existing state $s_{3,1}$ with $50\%$ probability when $C_6 = 3$ and
  a transition to an existing state $s_{3,2}$ with $50\%$ probability
  when $C_6 = 2$.

  After that, since there is no state without out-going transitions,
  our construction terminates and the Markov chain $X_{\bullet}$ is
  returned. $X_{\bullet}$ is irreducible because all states are connected by a path with positive probability. \qed
\end{exmpl}

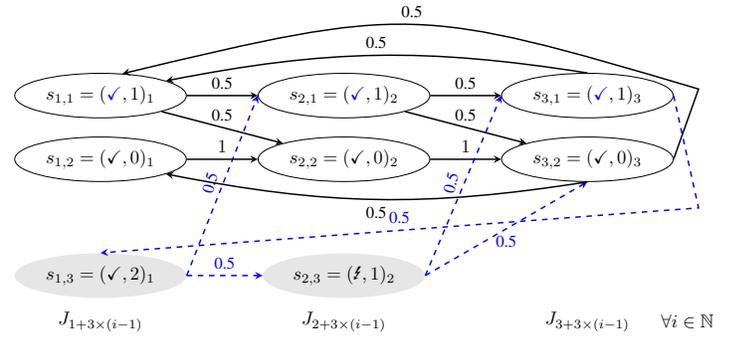
\begin{figure}[t]
  \centering
\scalebox{0.7}{

\begin{tikzpicture}[
  node distance = 0.35cm, auto, >=stealth,
  state/.style={ellipse, draw, minimum width=50pt},
  stateN/.style={ellipse, fill=gray!20, minimum width=50pt},
  arrow/.style={->, thick}
]

% Nodes
\node[state] (node1A) {$s_{1,1} =({\color{blue}\checkmark}, 1)_1$};
\node[state, below=of node1A] (node1B) {$s_{1,2}=(\checkmark, 0)_1$};
\node[stateN, below=of node1B, yshift=-1cm] (node1C) {$s_{1,3}=(\checkmark, 2)_1$};
\node[state, right=of node1A, xshift=1cm] (node2A) {$s_{2,1}=({\color{blue}\checkmark}, 1)_2$};
\node[state, below=of node2A] (node2B) {$s_{2,2}=(\checkmark, 0)_2$};
\node[stateN, below=of node2B, yshift=-1cm] (node2C) {$s_{2,3}=(\Lightning, 1)_2$};
\node[state, right=of node2A, xshift=1cm] (node3A) {$s_{3,1}=({\color{blue}\checkmark}, 1)_3$};
\node[state, below=of node3A] (node3B) {$s_{3,2}=(\checkmark, 0)_3$};

\node[below=of node1B, yshift=-2cm] (nodeJ1) {$J_{1+3\times (i-1)}$};
\node[below=of node2B, yshift=-2cm] (nodeJ2) {$J_{2+3\times (i-1)}$};
\node[below=of node3B, yshift=-2cm] (nodeJ3) {$J_{3+3\times (i-1)}$};
\node[right=of nodeJ3, xshift=-0cm] (nodeJ4) {$\forall i \in \Nbb$};

\node[right=of node3A] (node4A) {};
\node[right=of node3B, yshift=-0.8cm] (node4B) {};

% Arrows
\draw[arrow] (node1A) to node[midway, above] {\small 0.5} (node2A);
\draw[arrow] (node1A) to node[midway, above] {\small 0.5} (node2B);

\draw[arrow] (node1B) to node[midway, above] {$1$} (node2B);

\draw[arrow] (node2A) to node[midway, above] {\small 0.5} (node3A);
\draw[arrow] (node2A) to node[midway, above] {\small 0.5} (node3B);

\draw[arrow] (node2B) to node[midway, above] {$1$} (node3B);

\draw[arrow,->,dashed, blue] (node1C.east) to node[midway, above, rotate=75] {\small 0.5} (node2A.west);
\draw[arrow,->,dashed, blue] (node1C.east) to node[midway, above] {\small 0.5} (node2C.west);
\draw[arrow,->,dashed, blue] (node2C.east) to node[midway, above, rotate=75] {\small 0.5} (node3A.west);
\draw[arrow,->,dashed, blue] (node2C.east) to node[midway, below] {\small 0.5} (node3B.south);

\draw[arrow] (node3A.north) to [bend right=10] node [midway, above]  {\small 0.5} (node1A);
\draw[arrow,->, dashed, blue] (node3A.east) to (node4B.south) to node[midway, above]  {\small 0.5} (node1C.north);

\draw[arrow] (node3B.east) to  (node4A.north) to [bend right=15, looseness=1.3] node [midway, above]  {\small 0.5} (node1A.45);
\draw[arrow] (node3B.south) to [bend left=10] node [midway, below]  {\small 0.5} (node1B);

\end{tikzpicture}
}
\caption{Finite Markov chain of an infinite amount of jobs for
  Example~\ref{example:DMR-N-variable} by modifying with $D=6$ and
  $\delta=0$.}
  \label{fig:markov-reduction-ExampleD=6}
\end{figure}

\begin{exmpl}
  \label{example:DMR-example-varying-delta-And-D}
  We again adopt Example~\ref{example:DMR-N-variable} with two small
  modifications: %by two cases 
  (a) $D=4$, $\delta = 0, 1,2,\ldots,10$ and
  (b) $\delta=1$ and $D=4,5,6,\ldots,14$. We create the finite Markov chains
  according to our algorithm and 
  %using our algorithm presented in this section for them. We 
  confirm that they are all irreducible. Their DMRs are shown in
  Figure~\ref{fig:DMR-example-varying-delta-And-D}. 
  We observe that by enlarging the deadline $D$, the DMR decreases.
  However, when enlarging the relative dismiss point $\delta$, then DMR first increases but then seems to converge to a fixed value.
  \qed
\end{exmpl}

\begin{figure}[t]
  \centering
  \subfloat[DMR by varying $\delta$]{
    \includegraphics[width=0.45\linewidth]{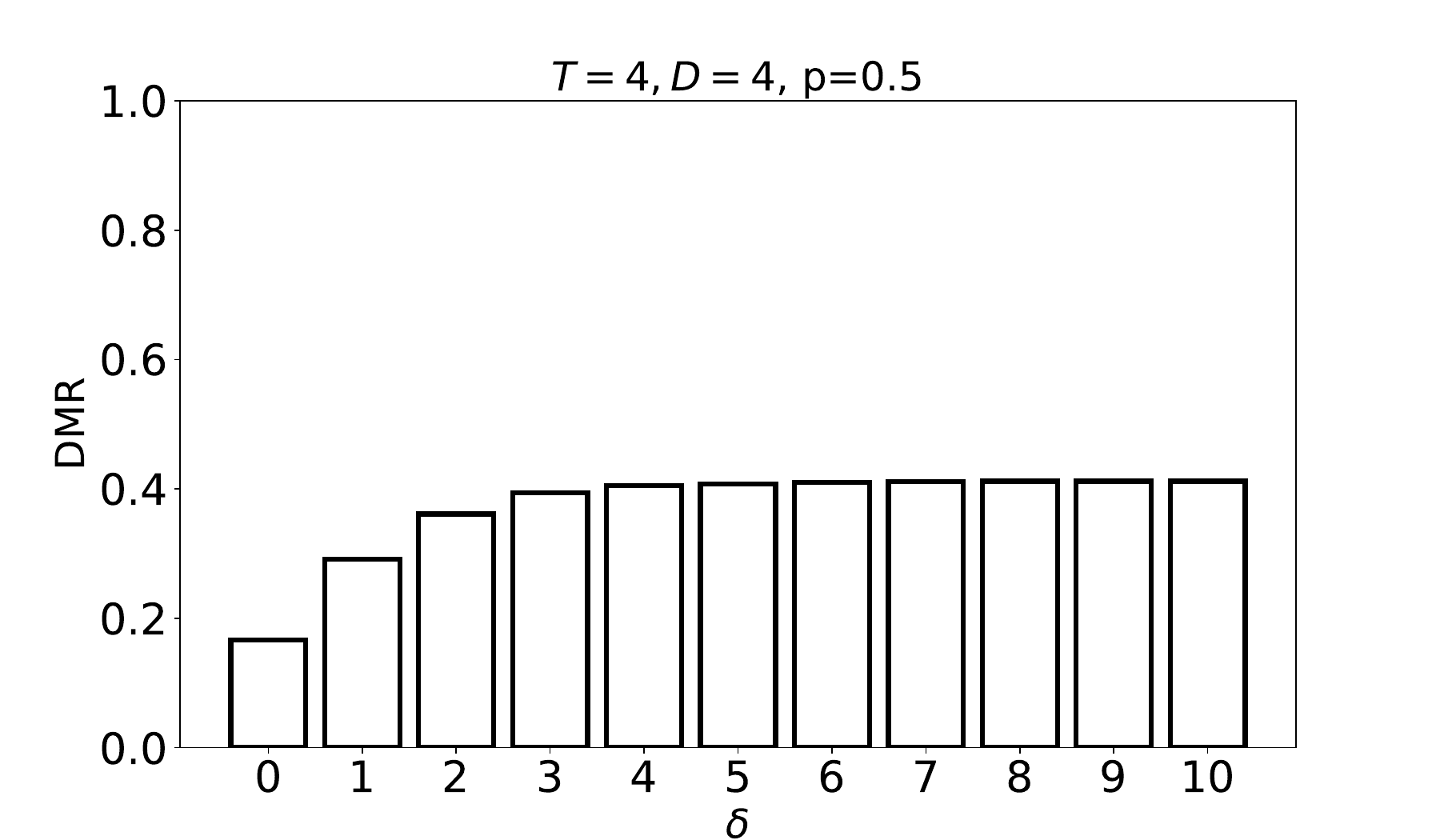}
  }%
  \subfloat[DMR by varying $D$]{
    \includegraphics[width=0.45\linewidth]{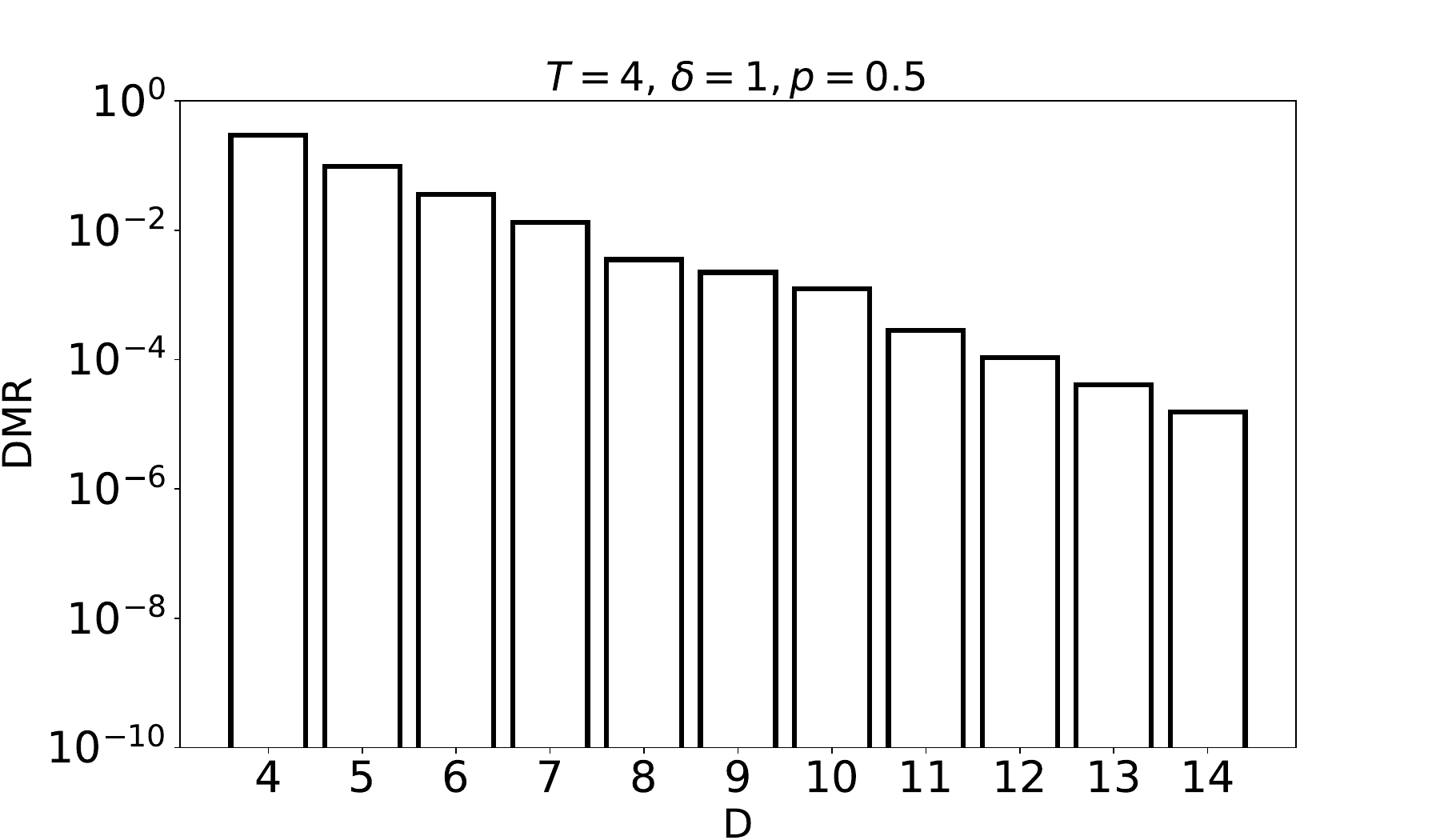}
  }
  \caption{DMR for Example~\ref{example:DMR-example-varying-delta-And-D}.}
  \label{fig:DMR-example-varying-delta-And-D}
\end{figure}

\subsection{Analysis of the Constructed Markov Chain}
\label{sec:analysis-supply-functions}

Let the resulting Markov chain be $X_{\bullet}$.  We now analyze the
time complexity of the construction process. Suppose that all variables are integers and the service
provision (if there is any) is also an integer. Let $W$ be
$\max_{j=1,2,\ldots,Q} service(j, D+\delta)$. The remaining execution
time of every state is therefore in $\setof{0, 1, 2, \ldots, W}$.
For every $j = 1, 2, \ldots, Q$, there are at most $2 (W+1)$ states of
$s_{j,*}$ in $X_{\bullet}$.  Finding whether a state exists and
merging afterwards can be implemented using the classic union-find
data structure~\cite{DBLP:journals/jacm/Tarjan75}, resulting in
$\Ocal(m\alpha(n))$ time complexity for $m$ operations on $n$ nodes, where
$\alpha(n)$ is the extremely slow-growing inverse Ackermann
function.

For every $j = 1, 2, \ldots$, every state $s_{j\mod Q, \ell}$ is evaluated with the $h$ realizations of the execution
times of $\tau$ after its
construction. For a given $j$, this results in at most $\Ocal(hW)$ find
operations and at most $\Ocal(W)$ state creations. If there is no new
state created for a given $j$, then the construction
finishes. Since there are most $\Ocal(W)$ states, the construction
terminates with $j=\Ocal(W)$. 
Therefore, provided
that $service(j+1, D+\delta)$ and $service(j+1, D)$ can be derived in
$\Ocal(1)$ time, the time complexity to construct $X_{\bullet}$ is
$\Ocal(hW^2\alpha(W))$.

Following the construction of the Markov chain, there is a canonical mapping from the realizations of $C_1, \dots, C_N$ to the realizations (states) of $X_1, \dots, X_N$.
This mapping preserves probabilities in the sense that
given a realization $(s_{1 \Mod{Q},\ell_1}, \dots, s_{N \Mod{Q},\ell_N})$ of $X_1, \dots, X_N$, 
the set of all realizations of $C_1, \dots, C_N$ that lead to $(s_{1\Mod{Q},\ell_1}, \dots, s_{N\Mod{Q},\ell_N})$ has the probability \mbox{$\Pbb(X_1=s_{1\Mod{Q},\ell_1}, \dots, X_N=s_{N\Mod{Q},\ell_N})$.}%
\footnote{The proof that this mapping is well-defined, surjective, and preserves the probability is left out to meet the space limitation. However, those properties hold mainly by the construction of the Markov chain.}
In the following, we show that this mapping does preserve the deadline miss rate, and therefore the constructed Markov chain can be used to calculate the deadline miss rate.

\begin{lem}
  \label{lemma:MarkovProperty-SupplyFunctions}
  Let $N\in \Nbb$, let $(s_{1\Mod{Q},\ell_1}, \dots, s_{N\Mod{Q},\ell_N})$ be a realization of $X_1, \dots, X_N$, and let $\Psi$ be the number of states that indicate deadline misses in $(s_{1\Mod{Q},\ell_1}, \dots, s_{N\Mod{Q},\ell_N})$.
  All realization of $C_1, \dots, C_N$ that lead to the realization $(s_{1\Mod{Q},\ell_1}, \dots, s_{N\Mod{Q},\ell_N})$ result in a job sequence of exactly $\Psi$ deadline misses.
\end{lem}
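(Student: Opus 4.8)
The plan is to prove the statement by induction on $j$, establishing a stronger invariant than the lemma itself: for every prefix $(s_{1\Mod{Q},\ell_1}, \dots, s_{j\Mod{Q},\ell_j})$ of the given realization, every realization $(e_{k_1}, \dots, e_{k_j})$ of $C_1, \dots, C_j$ that maps to this prefix (i) induces exactly the deadline hits/misses recorded by the indicators of $s_{1\Mod{Q},\ell_1}, \dots, s_{j\Mod{Q},\ell_j}$, and (ii) leaves the job sequence with backlog $rem(s_{j\Mod{Q},\ell_j})$ at time $jT$. Claim (ii) is the crucial part, because the backlog is exactly the information the construction carries forward; once it holds, claim (i) follows, since the deadline outcome of the next job is a deterministic function of that backlog and of the next realized execution time.

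First I would verify the base case $j=1$. Here a realization is a single value $e_{k_1}$, and by the initialization step $J_1$ misses its deadline precisely when $e_{k_1} > service(1, D)$, which is exactly when $s_{1,\ell_1}$ carries the $\Lightning$ indicator; the backlog after the period is the quantity in Eq.~\eqref{eq:remaining-w1h} (hit) or Eq.~\eqref{eq:remaining-w1h-dismissed} (miss), which is by definition $rem(s_{1,\ell_1})$. Since two realizations are merged into one state only when they agree on both the indicator and the remaining time, the invariant holds for every realization mapping to $s_{1,\ell_1}$.

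For the inductive step I would assume the invariant for $j$ and fix the next state $s_{(j+1)\Mod{Q},\ell_{j+1}}$. Take any $(e_{k_1}, \dots, e_{k_{j+1}})$ mapping to the extended prefix. By the induction hypothesis the backlog entering $J_{j+1}$ equals $rem(s_{j\Mod{Q},\ell_j})$, so the test $e_{k_{j+1}} + rem(s_{j\Mod{Q},\ell_j}) \leq service(j+1, D)$ from the expansion step decides the deadline outcome of $J_{j+1}$ uniformly for all such realizations, and this outcome matches the indicator of $s_{(j+1)\Mod{Q},\ell_{j+1}}$. Likewise, the new backlog given by Eq.~\eqref{eq:remaining-hit-j-supply-function} or Eq.~\eqref{eq:remaining-miss-j-supply-function} depends only on $rem(s_{j\Mod{Q},\ell_j})$ and $e_{k_{j+1}}$, hence equals $rem(s_{(j+1)\Mod{Q},\ell_{j+1}})$, re-establishing the invariant. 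Summing the indicators over $j = 1, \dots, N$ then shows that every realization mapping to the full state sequence produces exactly $\Psi$ deadline misses, which is the claim.

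The main obstacle I anticipate is justifying invariant (ii) across the two ways states are identified during construction: the merging of realizations within the first $Q$ jobs and, more delicately, the reuse of an already-constructed state $s_{(j\Mod{Q})+1,\ell^*}$ for jobs beyond $J_Q$. In both cases I must argue that the identification is legitimate precisely because the two states agree on the remaining execution time, so that the future evolution—determined by the supply offered over the coming period, which repeats every $Q$ jobs—is genuinely a function of the state alone and not of the particular history that produced it. This is exactly the memorylessness invoked in the construction, and it is what lets the finite chain faithfully represent an infinite job sequence without ever distinguishing realizations whose backlogs have coincided. Once this identification is shown to preserve the true backlog, both the deadline-outcome consistency and the final count follow mechanically.
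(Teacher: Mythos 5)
Your proposal is correct and follows essentially the same route as the paper's own proof: an induction over the job index in which each state's hit/miss indicator and remaining execution time (backlog) are matched against the actual job sequence, with the construction of the Markov chain guaranteeing the inductive step. The only difference is presentational --- you carry the backlog invariant explicitly through the induction, whereas the paper states it in the base case and leaves it implicit (``by construction'') in the induction step; your version is a slightly more rigorous rendering of the identical argument.
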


\begin{proof}
  We prove that this lemma holds for all $N\in \Nbb$ by induction over $N$.

  \textbf{Base state ($N=1$):} 
  Let $s_{1\Mod{Q},\ell_1}$ be a realization of $X_1$. 
  By construction, all realization of $C_1$ lead to jobs that have the same backlog and the same deadline miss/hit behavior as indicated by $s_{1\Mod{Q},\ell_1}$.

% \georg{Shouldn't this be a different parameter than $N$ as $N\in \Nbb$ is the size of the realization? So, for  a given $N$,  we are iterating over all $i\in 1\ldots N$, not showing something for all $N$.  Sounds like the induction only holds for the last step if we overload $N$ it like this.}

  \textbf{Induction step ($N-1 \mapsto N$):}
  Let $(s_{1\Mod{Q},\ell_1}, \dots, s_{N\Mod{Q},\ell_N})$ be a realization of $X_1, \dots, X_N$. 
  Then $(s_{1\Mod{Q},\ell_1}, \dots, s_{N-1\Mod{Q},\ell_{N-1}})$ is a realization of $X_1, \dots, X_{N-1}$.
  Let $\Rcal$ be the set of all realizations of $C_1, \dots, C_N$ that lead to $(s_{1\Mod{Q},\ell_1}, \dots, s_{N\Mod{Q},\ell_{N}})$.
  The first $N-1$ entries of any realization in $\Rcal$ lead to $(s_{1\Mod{Q},\ell_1}, \dots, s_{N-1\Mod{Q},\ell_{N-1}})$.
  Therefore, the first $N-1$ jobs of any job sequence obtained by $r \in \Rcal$ have the same number of deadline misses as $(s_{1\Mod{Q},\ell_1}, \dots, s_{N-1\Mod{Q},\ell_{N-1}})$ by induction.
  If the realizations of $C_N$ lead to a job sequence of an additional deadline miss, then 
  $s_{N\Mod{Q},\ell_{N}}$ indicates a deadline miss as well, by construction of the Markov chain.
  Similarly, if the realizations of $C_N$ lead to a job sequence of an additional deadline hit, then 
  $s_{N\Mod{Q},\ell_{N}}$ indicates a deadline hit as well.
  This proves that the number of deadline misses in $\Rcal$ are the same as the number of deadline misses in $(s_{1\Mod{Q},\ell_1}, \dots, s_{N\Mod{Q},\ell_{N}})$.
\end{proof}

% On the high-level, when constructing the Markov chain we are actually moving the stochastic process to another sample space. 
% More specifically, while for the deadline miss rate of the first $N$ jobs is a random variable on the sequence of realizations of $C_1,\dots, C_N$, counting the number of deadline misses in the Markov chain is a random variable on the realizations of $X_1, \dots, X_N$.
% The following lemma ensures the correctness of this transition.

% We now prove the correctness and the properties needed for deriving
% the DMR.

% \begin{lem}
%   \label{lemma:MarkovProperty-SupplyFunctions}
%   For the Markov chain $X_{\bullet}$ generated by using our algorithm
%   for the given supply functions, any realization of
%   $C_1, C_2, \ldots, C_N$ for $J_1, J_2, \ldots, J_N$ for $N\in\Nbb$
%   of the periodic real-time task $\tau$ with $h$ different execution
%   times is represented by $X_1, X_2, \ldots, X_N$ with correct
%   probabilities, in which $X_1$ is one of the initial nodes $s$ with
%   $\lambda_s > 0$ and $X_j$ transits to $X_{j+1}$ in $X_{\bullet}$ for
%   $j=1,2,\ldots, N-1$.
% \end{lem}
% \begin{proof}
%   \mario{I do not know what to say here except that this comes from the construction.
%   Maybe use induction: the result of each job is represented by a state + right probabilities to get to the next state. Every job sequence to markov sequence, every markov sequence to job sequence}
% \end{proof}

\begin{thm}
  \label{theorem:DMR-SupplyFunctions}
  If $X_{\bullet}$ generated by using our algorithm for the given
  supply functions is irreducible then the deadline
  miss rate of the periodic soft real-time task $\tau$ is
  Eq.~(\ref{eq:DRM-pi}).
\end{thm}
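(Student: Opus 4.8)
The plan is to reduce the statement to Corollary~\ref{cor:DMR_finite} by transporting its almost-sure convergence from the state-sequence space of $X_{\bullet}$ onto the probability space generated by the execution times $C_1, C_2, \dots$. First I would record that the Markov chain $X_{\bullet}$ returned by the construction is \emph{finite}: the complexity analysis in Section~\ref{sec:analysis-supply-functions} shows that every remaining execution time lies in $\setof{0, 1, \dots, W}$ with $W = \max_{j=1,\dots,Q} service(j, D+\delta)$, so there are at most $2(W+1)$ states for each residue class modulo $Q$ and $|S|$ is finite. Combined with the hypothesis that $X_{\bullet}$ is irreducible, Corollary~\ref{cor:DMR_finite} immediately yields a unique invariant distribution $\pi$ together with the almost-sure convergence of the state-counting average $\frac1N \sum_{j=1}^{N} f(X_j)$ to $\sum_{s\in S_{\DM}} \pi_s$, where $f$ is the deadline-miss indicator of Theorem~\ref{thm:DMR}.

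The core of the argument is then to show that this state-counting average agrees with the task's $\DMR_N = \frac1N \sum_{j=1}^{N} \DM(j)$, which counts \emph{actual} deadline misses rather than deadline-miss \emph{states}. Here I would invoke the probability-preserving mapping described just before Lemma~\ref{lemma:MarkovProperty-SupplyFunctions}: each realization of $C_1, \dots, C_N$ is sent to a state realization $(s_{1\Mod{Q},\ell_1}, \dots, s_{N\Mod{Q},\ell_N})$, and the fiber over each state realization carries exactly its induced probability $\Pbb(X_1=s_{1\Mod{Q},\ell_1}, \dots, X_N=s_{N\Mod{Q},\ell_N})$. Lemma~\ref{lemma:MarkovProperty-SupplyFunctions} then guarantees that on the whole fiber the number of deadline misses of the induced job sequence equals the number $\Psi = \sum_{j=1}^{N} f(X_j)$ of deadline-miss states. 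Consequently $\sum_{j=1}^{N} \DM(j)$ and $\sum_{j=1}^{N} f(X_j)$ are identically distributed for every $N$, and in fact coincide pointwise under the mapping.

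With this identification in hand, the almost-sure statement transfers. Since the mapping is surjective and preserves the finite-dimensional probabilities, the event $\{\lim_{N\to\infty} \frac1N \sum_{j=1}^{N} \DM(j) = \sum_{s\in S_{\DM}} \pi_s\}$ in the execution-time space is the preimage of the corresponding event in the state-sequence space, and the latter has probability $1$ by Corollary~\ref{cor:DMR_finite}. Therefore $\Pbb(\lim_{N\to\infty} \DMR_N = \sum_{s\in S_{\DM}} \pi_s) = 1$, which is precisely Eq.~(\ref{eq:DRM-pi}).

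The step I expect to be the main obstacle is this final transfer: the two averages a priori live on different probability spaces---the product space of the iid execution times versus the canonical path space of the Markov chain---so one must ensure that the canonical mapping is \emph{consistent across all} $N$, i.e.\ that the family of finite-dimensional identifications glues into a single measure-preserving map on infinite sequences. Lemma~\ref{lemma:MarkovProperty-SupplyFunctions} supplies only the per-$N$ equality of miss counts; the delicate point is that this consistency lets the limit operation and the ``probability~$1$'' conclusion pass through the mapping, rather than merely matching each fixed-$N$ distribution.
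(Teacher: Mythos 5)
Your proof takes essentially the same route as the paper's: finiteness of the constructed chain plus the hypothesized irreducibility feed Corollary~\ref{cor:DMR_finite}, while Lemma~\ref{lemma:MarkovProperty-SupplyFunctions} together with the probability-preserving mapping identifies the chain's deadline-miss count with the task's actual deadline-miss count. The ``main obstacle'' you flag --- gluing the per-$N$ identifications into a single measure-preserving map so the almost-sure limit transfers --- is exactly the point the paper relegates to a footnote (``left out to meet the space limitation''), so your write-up is in fact more explicit than the paper's own three-sentence proof.
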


\begin{proof}
  By Lemma~\ref{lemma:MarkovProperty-SupplyFunctions} the number of deadline misses in $X_{\bullet}$ coincide with the number of deadline misses of task $\tau$.
  The constructed Markov chain is always finite. 
  Therefore, Corollary~\ref{cor:DMR_finite} can be used to calculate the deadline miss rate.
\end{proof}

With the above discussions, the following Corollary concludes this section.
\begin{cor}
  \label{corollary:DMR-SupplyBoundFunctions-tarjan}
  If $X_{\bullet}$ generated by using our algorithm for the given
  supply functions is strongly connected (e.g., verified
  by adopting Tarjan's strong connected components
  algorithm~\cite{DBLP:journals/siamcomp/Tarjan72,DBLP:journals/ipl/NuutilaS94}),
  then $X_{\bullet}$ and the deadline miss rate of the periodic
  soft real-time task $\tau$ is \emph{upper bounded} by
  Eq.~(\ref{eq:DRM-pi}).
\end{cor}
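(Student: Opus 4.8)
The plan is to combine the finite-chain ergodic machinery already in hand with a pathwise coupling that shows adopting the worst-case (least) service can only increase the number of deadline misses. Concretely, I would run the construction of Section~\ref{sec:algorithm-supply-functions} verbatim, but feed it the \emph{least-service} bound of the supply bound functions in \eqref{eq:supply-bound-function} in place of the exact $\beta_j$; call the resulting chain $X_\bullet$. It is finite by the argument of Section~\ref{sec:analysis-supply-functions} (all remaining-execution-time values live in $\{0,1,\dots,W\}$). For a finite chain every drawn transition carries positive probability, so \emph{strongly connected} and \emph{irreducible} coincide (as noted before Corollary~\ref{cor:DMR_finite}); hence the Tarjan check in the hypothesis certifies irreducibility, and Corollary~\ref{cor:DMR_finite} supplies a unique stationary $\pi$ with $\Pbb(\lim_{N\to\infty}\DMR_N^{\mathrm{pes}} = \sum_{s\in S_{\DM}}\pi_s)=1$, where $\DMR_N^{\mathrm{pes}}$ is the first-$N$-job miss rate of $X_\bullet$ --- equivalently, by Lemma~\ref{lemma:MarkovProperty-SupplyFunctions} applied to this least-service construction, of the pessimistic job sequence. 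This pins the right-hand side of \eqref{eq:DRM-pi} as an almost-sure limit.

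Next I would couple the true system --- driven by the unknown admissible supply $\beta_j$, which lies between the two bounds of \eqref{eq:supply-bound-function} --- and the pessimistic system on a single common execution-time sequence $C_1,C_2,\dots$. The entire claim then reduces to a \emph{monotonicity lemma}: along every realization and for every prefix length $N$, the true system incurs no more deadline misses than the pessimistic one, i.e.\ $\DMR_N \le \DMR_N^{\mathrm{pes}}$ pathwise. Granting this, letting $N\to\infty$ gives $\limsup_{N\to\infty}\DMR_N \le \sum_{s\in S_{\DM}}\pi_s$ almost surely, which is exactly the assertion that the true deadline miss rate is upper bounded by \eqref{eq:DRM-pi}.

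The monotonicity lemma is the heart of the matter and the step I expect to fight hardest. Intuitively a miss says that, over the relevant window, cumulative demand outstrips cumulative supply; since the pessimistic supply is pointwise smaller, such a miss ought to persist --- and it would, were it not for the dismiss point, which lets the two systems discard \emph{different} amounts of work and thereby decouples their demand processes. In the unified recurrence $r_{j+1}=\max\{\min\{C_{j+1}+r_j,\ service(j+1,D+\delta)\}-\beta_{j+1}(T),\,0\}$ the pessimistic system has the \emph{smaller} dismiss cap $service(j+1,D+\delta)$, so it may throw away more work and momentarily carry \emph{less} backlog than the true system; thus a naive induction on raw backlog domination $r_j^{\mathrm{pes}}\ge r_j$ breaks down. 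My plan is to carry instead the miss-oriented invariant ``whenever the pessimistic job meets its deadline the true job does as well,'' supported by a quantitative bound on the backlog gap in terms of the service slack $service(j+1,D)$ at the deadline. The delicate case is when both systems are saturated and dismissing, where the gap can transiently exceed that slack; I would close it by using that the true system's larger per-period service $\beta_{j+1}(T)$ drains its backlog at least as fast, so the gap cannot survive up to an instant at which the pessimistic system is about to meet a deadline. I expect the jobs whose dismiss window $[(j-1)T,\ (j-1)T+D+\delta)$ straddles several periods to demand the most care, since a single state must then aggregate work carried across multiple future releases.
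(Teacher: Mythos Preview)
You have misread the context. This corollary closes Section~\ref{sec:DMR-supply-function}, the \emph{deterministic} supply-function case, not the supply-bound-function case of Section~\ref{sec:DMR-supplybound}. The chain $X_\bullet$ referred to here is built from the exact $\beta_j$, and Lemma~\ref{lemma:MarkovProperty-SupplyFunctions} already shows that it reproduces the true deadline-miss count \emph{exactly}, not merely as an upper bound. The corollary is then a one-line consequence of Theorem~\ref{theorem:DMR-SupplyFunctions}: every edge drawn in the constructed graph carries positive transition probability, so strong connectivity of the digraph (what Tarjan's algorithm checks) is literally the same as irreducibility of the chain --- an equivalence already noted in the discussion preceding Corollary~\ref{cor:DMR_finite} --- and Theorem~\ref{theorem:DMR-SupplyFunctions} then gives $\DMR=\sum_{s\in S_{\DM}}\pi_s$, hence in particular $\DMR\le\sum_{s\in S_{\DM}}\pi_s$. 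No coupling, no monotonicity lemma, no dismiss-point subtleties are needed.

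The argument you sketch is really aimed at Theorem~\ref{theorem:DMR-SupplyBoundFunctions} in Section~\ref{sec:DMR-supplybound}, and even there the paper sidesteps the pathwise backlog comparison you are wrestling with. Rather than feeding the construction the least-service bound alone, coupling with the true system, and chasing a backlog-gap invariant, the paper bakes the pessimism into the chain construction asymmetrically --- using $service^l$ for the miss test but the \emph{larger} cap $service^u(j{+}1,D{+}\delta)$ and the smaller drain $\beta_{j+1}^l(T)$ in Eq.~\eqref{eq:remaining-miss-j-supply-bound-function} --- and then proves by a short induction on $N$ (Lemma~\ref{lemma:safe-over-approximation-supply-bound}) that every realization of $C_1,\dots,C_N$ maps to a chain trajectory with at least as many indicated misses. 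Because the chain's dismiss cap is $service^u$, it never discards more work than any admissible true system would, which is precisely what defuses the ``different amounts dismissed'' obstruction you correctly identified as fatal to a pure least-service construction.
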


\section{DMR under Supply Bound Functions}
\label{sec:DMR-supplybound}

This section sketches the key %components 
steps to construct the corresponding Markov chain %to capture 
that captures the execution of a soft
real-time task $\tau$ served by a GPC with supply bound functions
\mbox{$(\beta_1^u(t), \beta_1^l(t)), (\beta_2^u(t), \beta_2^l(t)), \ldots,
(\beta_j^u(t), \beta_j^l(t))$ $\forall j \in \Nbb$.} These steps are %very
similar to those in Section~\ref{sec:DMR-supply-function} with the key
difference that the $service(j, t)$ used in
Section~\ref{sec:DMR-supply-function} has to be replaced by the upper
accumulative service $ service^u(j, t)$ and the lower accumulative
service $service^l(j, t)$ accordingly:

\begin{align*}
  service^u(j, t) = &\sum_{\ell=j}^{j+\floor{\frac{t}{T}}-1} \beta_\ell^u(T) +
  \beta_{j+\floor{\frac{t}{T}}}^u\left(t-T\floor{\frac{t}{T}}\right)\\
  service^l(j, t) = &\sum_{\ell=j}^{j+\floor{\frac{t}{T}}-1} \beta_\ell^l(T) +
  \beta_{j+\floor{\frac{t}{T}}}^l\left(t-T\floor{\frac{t}{T}}\right)
\end{align*}

We must 
%What is needed is to 
replace $service(j+1, t)$ used in
Section~\ref{sec:DMR-supply-function} with $service^l(j+1, t)$ to
classify whether the job under consideration has a deadline miss or
not. When calculating the remaining execution time, we should keep as
much as %possible as long as 
the GPC permits.  Therefore, the remaining
execution time is calculated based on $service^u$.

We only explain the \emph{state expansions}, as the initialization
step is simpler and almost identical.  If $J_{j+1}$ has no deadline
miss, Eq.~(\ref{eq:remaining-hit-j-supply-function}) is replaced with
\begin{equation}
  \label{eq:remaining-hit-j-supply-bound-function}
  \max\{e_k+rem(s_{j,\ell}) - \beta_{j+1}^u(T), 0\}.
\end{equation}
If $J_{j+1}$ has a deadline miss when using $service^l(j+1, t)$ for its
estimation, Eq.~(\ref{eq:remaining-miss-j-supply-function}) is replaced
with
\begin{equation}
  \label{eq:remaining-miss-j-supply-bound-function}    
  \max\left\{    
    \min\left\{
      \begin{array}{l}
        rem(s_{j,\ell})+e_k,\\
        service^u(j+1, D+\delta)   
      \end{array}
    \right\} - \beta_{j+1}^l(T)\}, 0\right\}.            
\end{equation}

The algorithm in Section~\ref{sec:DMR-supply-function} can be directly
applied with this minor modification. The time complexity remains the
same by defining $W=\max_{j=1,2,\ldots,Q} service^u(j, D+\delta)$.

\begin{exmpl}
  \label{example:Markov-chain-supply-bound-functions}
  Consider the soft real-time task $\tau$ in
  Example~\ref{example:DMR-N-variable} with $T=4$, $D=4$, $\delta=1$,
  $\Pbb(C_j=2)=p=0.5$ and \mbox{$\Pbb(C_j=3)=0.5$}.  Suppose that $\tau$ is
  served by a GPC with the upper and lower supply bound functions described in Example~\ref{example:supply-bound-functions-v2}.
  Figure~\ref{fig:markov-construction-example-supply-bound-functions}
  shows the resulting Markov chain. %constructed Markov chain for this example.  
  The construction
  of the initial states and of
  $s_{1,1}, s_{1,2}, s_{2,1}, s_{2,2}, s_{3,2}$ is identical to %our
  %earlier 
  Example~\ref{example:missrate-ergodic-stationary}.  For the
  out-going edge of $s_{2,1}$, if $C_3$ is $2$, then it can be
  finished in time and transits to $s_{3,2}$. However, if $C_3$ is
  $3$, then $J_3$ misses its deadline. Furthermore, we know that
  $1+3 = service^u(3, D+\delta)$. Therefore, $s_{2,1}$ transits to
  $s_{3,1}$ with $50\%$ probability. For the out-going edge of
  $s_{3,1}$, if $C_4$ is $2$, then $J_4$ misses its deadline and since
  $1+2=service^u(4, D+\delta)=3$ it has one unit of remaining
  workload. That is, $s_{3,1}$ transits to $s_{1,1}$ with $50\%$
  probability if $C_4$ is $2$. If $C_4$ is $3$, then $J_4$ misses its
  deadline, and, since $1+3>service^u(4, D+\delta)=3$, it has one unit
  of remaining workload and one unit of remaining workload is
  dismissed. That is, $s_{3,1}$ transits to $s_{1,1}$ with $50\%$
  probability if $C_4$ is $3$. As a result, $s_{3,1}$ transits to
  $s_{1,1}$ with $100\%$ probability.

  This Markov chain is irreducible and finite, and the resulting DMR %for this example 
  is $1/3$.
  
  Again, if we fix $D=4$ and vary $\delta=0,1,2, \dots$, or if we fix $\delta=1$ and vary $D=4,5,6, \dots$, we can calculate the DMR. 
  The results are illustrated in Figure~\ref{fig:Markov-chain-supply-bound-functions}.
  We observe that the DMR when enlarging $\delta$ first increases and then seemingly converges to a fixed value.
  The DMR when enlarging $D$ decreases.
  However, it does not decrease monotonically but jumps up and down.
  This is counterintuitive as one may assume that with a larger deadline the DMR would always decrease. 
  The reason is simply that the computed DMR is an over-approximation which is more or less loose depending on the choice of $D$.
  \qed
\end{exmpl}

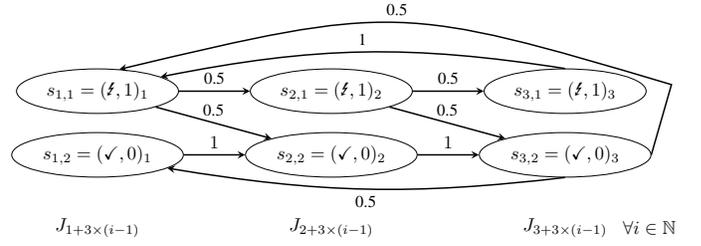
\begin{figure}[t]
  \centering
\scalebox{0.7}{

\begin{tikzpicture}[
  node distance = 0.35cm, auto, >=stealth,
  state/.style={ellipse, draw, minimum width=50pt},
  arrow/.style={->, thick}
]

% Nodes
\node[state] (node1A) {$s_{1,1} =(\Lightning, 1)_1$};
\node[state, below=of node1A] (node1B) {$s_{1,2}=(\checkmark, 0)_1$};
\node[state, right=of node1A, xshift=1cm] (node2A) {$s_{2,1}=(\Lightning, 1)_2$};
\node[state, below=of node2A] (node2B) {$s_{2,2}=(\checkmark, 0)_2$};
\node[state, right=of node2A, xshift=1cm] (node3A) {$s_{3,1}=(\Lightning, 1)_3$};
\node[state, below=of node3A] (node3B) {$s_{3,2}=(\checkmark, 0)_3$};

\node[below=of node1B, yshift=-.3cm] (nodeJ1) {$J_{1+3\times (i-1)}$};
\node[below=of node2B, yshift=-.3cm] (nodeJ2) {$J_{2+3\times (i-1)}$};
\node[below=of node3B, yshift=-.3cm] (nodeJ3) {$J_{3+3\times (i-1)}$};
\node[right=of nodeJ3, xshift=-.3cm] (nodeJ4) {$\forall i \in \Nbb$};

\node[right=of node3A] (node4A) {};
\node[right=of node3B] (node4B) {};

% Arrows
\draw[arrow] (node1A) to node[midway, above] {\small 0.5} (node2A);
\draw[arrow] (node1A) to node[midway, above] {\small 0.5} (node2B);

\draw[arrow] (node1B) to node[midway, above] {$1$} (node2B);

\draw[arrow] (node2A) to node[midway, above] {\small 0.5} (node3A);
\draw[arrow] (node2A) to node[midway, above] {\small 0.5} (node3B);

\draw[arrow] (node2B) to node[midway, above] {$1$} (node3B);

\draw[arrow] (node3A.north) to [bend right=10] node [midway, above]  {\small 1} (node1A);

\draw[arrow] (node3B.east) to  (node4A.north) to [bend right=15, looseness=1.3] node [midway, above]  {\small 0.5} (node1A.45);
\draw[arrow] (node3B.south) to [bend left=8] node [midway, below]  {\small 0.5} (node1B);
\end{tikzpicture}
}
\caption{Finite Markov chain of an infinite amount of jobs for
  Example~\ref{example:Markov-chain-supply-bound-functions}.}
  \label{fig:markov-construction-example-supply-bound-functions}
\end{figure}

\begin{figure}[t]
	\centering
	\subfloat[DMR by varying $\delta$]{
		\includegraphics[width=0.5\columnwidth]{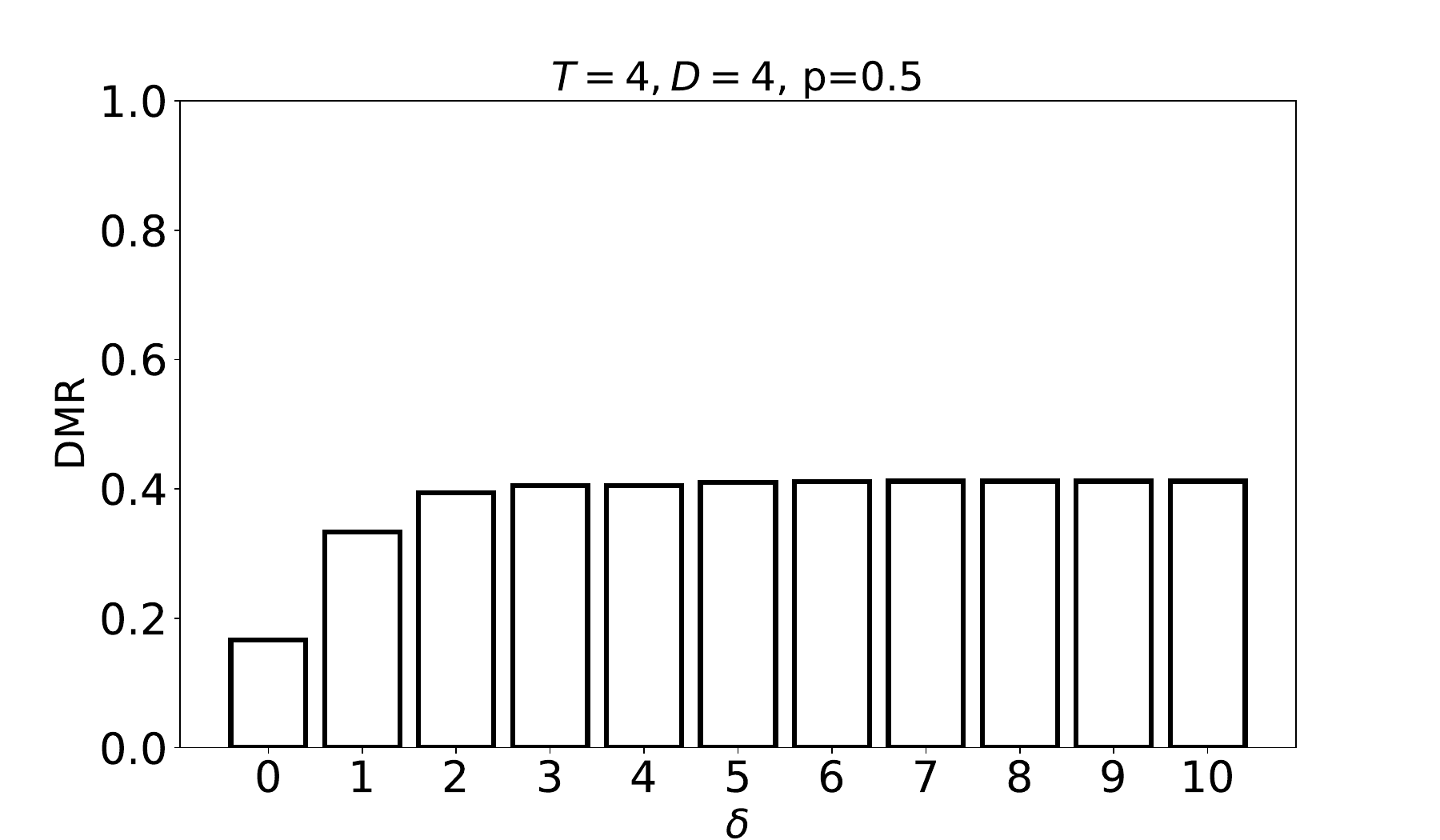}
	}
	\subfloat[DMR by varying $D$]{
		\includegraphics[width=0.5\columnwidth]{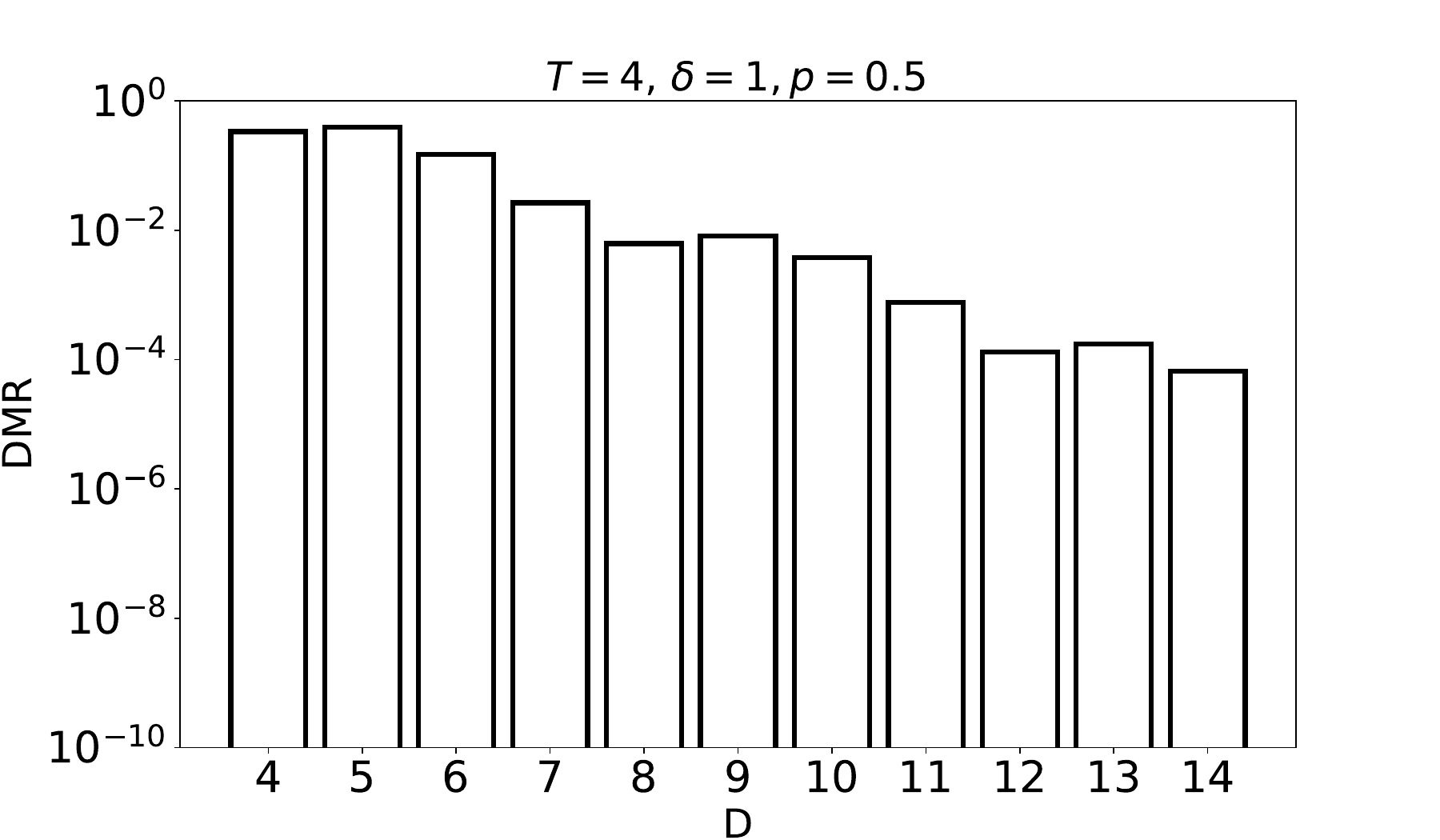}
	}
	\caption{DMR for Example~\ref{example:Markov-chain-supply-bound-functions}.}
	\label{fig:Markov-chain-supply-bound-functions}
\end{figure}

As in the case with supply functions, there is a canonical mapping from the realizations of $C_1, \dots, C_N$ to the realizations (states) of $X_1, \dots, X_N$ for the case with supply bound functions as well, following the construction of the Markov chain.
Although this mapping does not necessarily preserve the deadline miss rate, % exactly, 
it always maps a realization of $C_1, \dots, C_N$ to a realization of $X_1, \dots, X_N$ with the same or higher deadline miss rate, which we prove in the following lemma.
%We prove this property in the following lemma.

\begin{lem}
  \label{lemma:safe-over-approximation-supply-bound}
  Let $N\in \Nbb$, let $(s_{1\Mod{Q},\ell_1}, \dots, s_{N\Mod{Q},\ell_N})$ be a realization of $X_1, \dots, X_N$ and let $\Psi$ be the number of states that indicate deadline misses in $(s_{1\Mod{Q},\ell_1}, \dots, s_{N\Mod{Q},\ell_N})$.
  All realization of $C_1, \dots, C_N$ that lead to the realization $(s_{1\Mod{Q},\ell_1}, \dots, s_{N\Mod{Q},\ell_N})$ result in a job sequence of at most $\Psi$ deadline misses.
\end{lem}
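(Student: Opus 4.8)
The plan is to prove the statement by induction on $N$, paralleling the proof of Lemma~\ref{lemma:MarkovProperty-SupplyFunctions}, but replacing the equality between chain misses and real misses by the inequality asserted here. The pathwise object I would track is, for a fixed realization $(e_{k_1},\dots,e_{k_N})$ of $C_1,\dots,C_N$, the job sequence produced by the true (unknown) supply $\beta$ together with the state sequence $(s_{1\Mod{Q},\ell_1},\dots,s_{N\Mod{Q},\ell_N})$ produced by the construction. The single fact that drives the over-approximation is that the construction classifies a deadline miss of $J_{j+1}$ using the pessimistic lower service $service^l(j+1,D)$, while the real system is served by at least $service(j+1,D)\ge service^l(j+1,D)$. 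Hence, whenever the construction declares a \emph{hit}, the real job also hits; equivalently, every real miss forces the corresponding state to indicate a miss. This already settles the base case $N=1$, since there is no backlog and a real miss of $J_1$ means $e_{k_1}>service(1,D)\ge service^l(1,D)$, which forces the state to indicate a miss.

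For the induction step I would strengthen the hypothesis to a backlog-dominance statement: the remaining execution time $rem(s_{j\Mod{Q},\ell_j})$ tracked by the chain is an upper bound on the real backlog carried past $jT$. Granting this invariant, the miss argument of the previous paragraph upgrades from ``same backlog'' to ``dominated backlog'': if the real job $J_{j+1}$ misses, then $e_{k_{j+1}}$ plus the real backlog exceeds $service(j+1,D)$, and since the chain backlog is at least the real one and $service^l\le service$, the construction's test fires as well, so whenever an additional real miss occurs the $N$-th state also indicates a miss. Combined with the induction hypothesis on the first $N-1$ jobs, this yields at most $\Psi$ real misses. It then remains to verify that the dominance invariant is preserved. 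In the \emph{miss} branch this is essentially immediate: Eq.~\eqref{eq:remaining-miss-j-supply-bound-function} subtracts the \emph{minimum} full-period service $\beta^l_{j+1}(T)\le\beta_{j+1}(T)$ and caps the accumulated work by $service^u(j+1,D+\delta)$, which is at least the work the real system can retain after its dismiss point; both effects keep the chain backlog maximal, so dominance propagates.

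The main obstacle is the \emph{hit} branch. Here Eq.~\eqref{eq:remaining-hit-j-supply-bound-function} subtracts the \emph{maximum} full-period service $\beta^u_{j+1}(T)\ge\beta_{j+1}(T)$, so the chain may carry \emph{less} backlog than reality, and naive backlog-dominance can fail---most visibly in the arbitrary-deadline regime $D>T$, where a hit is decided across several periods and the residual is genuine multi-period spillover. The resolution I would pursue is to refine the invariant so that it is not broken by an isolated hit: track the pair consisting of the number of states indicating a miss so far and the chain backlog, and prove a lexicographic dominance, arguing that whenever the hit branch drives the chain backlog below the real backlog the job provably completed by its deadline even under $service^l$, so the surplus real backlog is bounded by the within-period slack $\beta^u_{j+1}(T)-\beta_{j+1}(T)\le\beta^u_{j+1}(T)-\beta^l_{j+1}(T)$ and is therefore guaranteed to be absorbed before the next deadline test fires in the chain. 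Making this slack-absorption quantitative across consecutive hit periods is the delicate part and the crux of the argument; once it is established, the per-step over-detection of misses closes the induction exactly as in Lemma~\ref{lemma:MarkovProperty-SupplyFunctions}.
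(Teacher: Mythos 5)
Your overall skeleton --- induction over $N$, driven by (i) over-detection of misses because the chain tests against $service^l \leq service$, and (ii) a backlog-dominance invariant (chain backlog $\geq$ real backlog) --- is exactly what the paper's proof does, except that the paper's proof is a four-sentence induction that simply \emph{asserts} both facts ``by construction of the Markov chain'': its base case states that every realization of $C_1$ has at most the backlog recorded in the state and can only miss if the state indicates a miss, and its induction step states that any additional real miss forces $s_{N \Mod{Q}, \ell_N}$ to indicate a miss. Your verification of the miss branch (the cap $service^u(j+1,D+\delta)$ and the subtraction of $\beta^l_{j+1}(T)$ in Eq.~(\ref{eq:remaining-miss-j-supply-bound-function}) both keep the chain backlog maximal) is correct and supplies the detail the paper leaves implicit.

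The gap is your treatment of the hit branch, which you yourself flag as the unproven crux --- and the patch you propose cannot be made to work. Your claim that any surplus of real backlog over chain backlog created by subtracting $\beta^u_{j+1}(T)$ in Eq.~(\ref{eq:remaining-hit-j-supply-bound-function}) is ``guaranteed to be absorbed before the next deadline test fires'' is false: the surplus can persist across periods indefinitely. Concretely, take $T=4$, $D=8$, $\delta=0$, $Q=1$, $\beta^u_j(t)=t$, $\beta^l_j(t)=\max\{t-2,0\}$, and actual supply $\beta_j=\beta^l_j$, with execution times $e\in\{2,4\}$. Then $service^l(j,8)=4$, so every realization passes the hit test (e.g., $e_k=4$, $rem=0$ gives $4\leq 4$), and Eq.~(\ref{eq:remaining-hit-j-supply-bound-function}) returns backlog $\max\{4-\beta^u_j(4),0\}=0$; the chain collapses to the single state $(\checkmark,0)$ and $\Psi=0$ on every realization. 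In reality the supply is only $2$ units per period, so for $(C_1,C_2,C_3)=(4,2,4)$ jobs $J_1,J_2$ finish exactly at their deadlines while $J_3$ receives only $2$ of its $4$ units by time $16$ and misses --- one real miss against $\Psi=0$, contradicting the lemma. Since the chain here has only one state, no strengthened (lexicographic or otherwise) state invariant can close the induction: the statement only holds if the hit branch preserves \emph{plain} backlog dominance, i.e., if Eq.~(\ref{eq:remaining-hit-j-supply-bound-function}) subtracts $\beta^l_{j+1}(T)$ rather than $\beta^u_{j+1}(T)$ (consistent with the paper's own principle of keeping ``as much as the GPC permits'' and with Eq.~(\ref{eq:remaining-miss-j-supply-bound-function})), or if one additionally assumes $\beta^u_j(T)=\beta^l_j(T)$ for all $j$, which happens to hold in all of the paper's examples. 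Under either reading your observations (i) and (ii) already close the induction exactly as in Lemma~\ref{lemma:MarkovProperty-SupplyFunctions}, the detour through slack absorption is unnecessary, and your proof collapses to the paper's; without such a reading, neither your argument nor the paper's terse one is salvageable, because the claim itself fails.
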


% \georg{Again, concerns that $N$ is overloaded.}

\begin{proof}
  We prove that this lemma holds for all $N\in \Nbb$ by induction over $N$.

  \textbf{Base state ($N=1$):} 
  Let $s_{1\Mod{Q},\ell_1}$ be a realization of $X_1$. 
  By construction, all realization of $C_1$ lead to jobs that have at most the backlog stated in $s_{1\Mod{Q},\ell_1}$. 
  The jobs can only have deadline misses if $s_{1\Mod{Q},\ell_1}$ indicates a deadline miss.

  \textbf{Induction step ($N-1 \mapsto N$):}
  Let $(s_{1\Mod{Q},\ell_1}, \dots, s_{N\Mod{Q},\ell_N})$ be a realization of $X_1, \dots, X_N$. 
  Then $(s_{1\Mod{Q},\ell_1}, \dots, s_{N-1\Mod{Q},\ell_{N-1}})$ is a realization of $X_1, \dots, X_{N-1}$.
  Let $\Rcal$ be the set of all realizations of $C_1, \dots, C_N$ that lead to $(s_{1\Mod{Q},\ell_1}, \dots, s_{N\Mod{Q},\ell_{N}})$.
  The first $N-1$ entries of any realization in $\Rcal$ lead to $(s_{1\Mod{Q},\ell_1}, \dots, s_{N-1\Mod{Q},\ell_{N-1}})$.
  Therefore, the first $N-1$ jobs of any job sequence obtained by $r \in \Rcal$ have at most the same number of deadline misses as $(s_{1\Mod{Q},\ell_1}, \dots, s_{N-1\Mod{Q},\ell_{N-1}})$ by induction.
  If any of the realizations of $C_N$ leads to a job sequence of an additional deadline miss, then 
  $s_{N\Mod{Q},\ell_{N}}$ indicates a deadline miss as well, by construction of the Markov chain.
  This proves that the number of deadline misses in $\Rcal$ is at most the same as the number of deadline misses in $(s_{1\Mod{Q},\ell_1}, \dots, s_{N\Mod{Q},\ell_{N}})$.
\end{proof}

% \begin{lem}
%   \label{lemma:safe-over-approximation-supply-bound}
%   \jj{A statement shows that this is a safe over-approximation and it is a Markov chain.}
% \end{lem}

\begin{thm}
  \label{theorem:DMR-SupplyBoundFunctions}
  If $X_{\bullet}$ generated by using our algorithm for the given
  probabilistic supply functions is irreducible, then the
  deadline miss rate of the periodic soft real-time task $\tau$ is
  \emph{upper bounded} by Eq.~(\ref{eq:DRM-pi}).
\end{thm}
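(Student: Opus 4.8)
The plan is to follow the structure of the proof of Theorem~\ref{theorem:DMR-SupplyFunctions}, replacing the exact correspondence of deadline misses (which no longer holds under supply bound functions) by the one-sided bound established in Lemma~\ref{lemma:safe-over-approximation-supply-bound}. The constructed Markov chain $X_{\bullet}$ is finite and, by assumption, irreducible, so Corollary~\ref{cor:DMR_finite} already supplies a unique stationary distribution $\pi$ together with almost-sure convergence of the chain's own deadline miss rate to $\sum_{s\in S_{\DM}} \pi_s$. What remains is to transport this quantity into an upper bound on the actual task's rate.

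First I would couple the task execution and the Markov chain on the single probability space generated by the execution-time sequence $(C_j)_{j\in\Nbb}$. The canonical probability-preserving mapping sends each realization of $C_1,\dots,C_N$ to a unique realization $(X_1,\dots,X_N)$ of $X_{\bullet}$ and reproduces the Markov-chain law with initial distribution $\lambda$ and transition matrix $P$. Under this coupling both the true miss indicators $\DM(j)$ and the state process $X_{\bullet}$ are deterministic functions of the same randomness, so the inequality can be argued path by path rather than only in distribution.

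Next, let $f:S\to\Rbb$ be the deadline-miss indicator with $f(s)=1$ exactly when $s\in S_{\DM}$. Lemma~\ref{lemma:safe-over-approximation-supply-bound} states that, along every sample path, the number of genuine deadline misses among the first $N$ jobs is at most the number of miss-states among $X_1,\dots,X_N$; dividing by $N$ yields the pointwise inequality $\DMR_N \leq \frac{1}{N}\sum_{j=1}^{N} f(X_j)$ for every $N$, almost surely. Applying Corollary~\ref{cor:DMR_finite} (equivalently, Theorem~\ref{thm:ergodic_norris} with this $f$) gives $\frac{1}{N}\sum_{j=1}^{N} f(X_j)\to \sum_{s\in S_{\DM}}\pi_s$ almost surely, and the pointwise inequality then forces $\limsup_{N\to\infty}\DMR_N \leq \sum_{s\in S_{\DM}}\pi_s$ almost surely. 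This is precisely the claim that the deadline miss rate is upper bounded by Eq.~(\ref{eq:DRM-pi}): whenever the true limit of $\DMR_N$ exists it cannot exceed this value, and otherwise the bound controls the worst-case long-run rate.

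The hard part will not be any calculation but rather the bookkeeping of the coupling: one must verify that Lemma~\ref{lemma:safe-over-approximation-supply-bound} is invoked path by path (so that the bound on $\DMR_N$ is deterministic given the realization, not merely an expectation), and that the probability-preserving mapping legitimately carries the almost-sure limit from the Markov-chain measure back to the execution-time measure. Once the coupling is fixed on the common space $(C_j)_{j\in\Nbb}$, the ergodic convergence of the finite irreducible chain does the rest and no further argument is needed.
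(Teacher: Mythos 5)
Your proposal is correct and follows essentially the same route as the paper's proof: it combines Lemma~\ref{lemma:safe-over-approximation-supply-bound} (the pathwise over-approximation of deadline misses under the canonical probability-preserving mapping) with Corollary~\ref{cor:DMR_finite} applied to the finite, irreducible chain. Your additional bookkeeping---making the coupling explicit, dividing by $N$, and passing to the $\limsup$---merely spells out details the paper leaves implicit, and your observation that the bound controls $\limsup_{N\to\infty}\DMR_N$ even when the limit may fail to exist is a slightly more careful phrasing of the same conclusion.
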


\begin{proof}
  By Lemma~\ref{lemma:safe-over-approximation-supply-bound} the number
  of deadline misses in $X_{\bullet}$ is %$\geq $ 
  not smaller than
  the number of deadline misses of task $\tau$.
  Since the constructed Markov chain is always finite, 
  Corollary~\ref{cor:DMR_finite} can be used to calculate the deadline miss rate of $X_{\bullet}$.
\end{proof}

Similar to Corollary~\ref{corollary:DMR-SupplyBoundFunctions-tarjan},
whether $X_{\bullet}$ is strongly connected or not can be verified by
Tarjan's strong connect components
algorithm~\cite{DBLP:journals/siamcomp/Tarjan72,DBLP:journals/ipl/NuutilaS94}.

\section{Remarks}
\label{sec:remarks}

The adoption of GPC in this paper allows us to analyze the deadline
miss rate of a soft real-time task for several scenarios, including
preemptive fixed-priority scheduling, reservation servers, and TDMA.
In order to simplify the presentation of this paper, we assume that the
supply functions and supply bound
functions %, and  probabilistic supply functions 
are all specified in a segmented manner (with a segment
length of $T$) and are repeated every $Q$ segments. We note that this
simplification is %in fact
 not necessary. In fact, %for the case with
when considering supply functions, we only need to identify the accumulative services
in time intervals
$[(j-1)T, jT),$ $[(j-1)T, (j-1)T+D), [(j-1)T, (j-1)T+D+\delta)$ for all
$j\in \Nbb$. This can be done in the deterministic manner (like in
Section~\ref{sec:DMR-supply-function}) %, %probabilistic manner (like in
% Section~\ref{sec:DMR-Probabilistic}, 
or approximated manner (like in
Section~\ref{sec:DMR-supplybound}).  As long as the construction
process of the Markov chain can clearly identify the repetitive
pattern and the precise information of the supply functions and
supply bound functions, %, and probabilistic supply functions 
then the
algorithms to construct the Markov chains in
Sections~\ref{sec:DMR-supply-function} %,~\ref{sec:DMR-Probabilistic},
and~\ref{sec:DMR-supplybound}
can be applied (with some minor modifications). %, respectively.

The fundamental approach presented in this work is easily extendable to more complex scenarios, if a higher number of states can be tolerated. 
For example, one may consider \emph{probabilistic supply functions}, where supply %bound 
functions $\beta_j$ are randomly drawn from $\beta_j^1, \dots, \beta_j^g$ according to a discrete probability distribution.
If $\beta_1, \beta_2, \dots$ are i.i.d. and $D\leq D+\delta\leq T$, we only need to put
$(\{\checkmark, \Lightning\},w , \gamma)_j$ for
$\gamma \in \setof{1,2,\ldots,g}$ to further indicate that this state
uses the supply function $\beta_j^{\gamma}$. For $D+\delta > T$, we need to further consider the realization of $\ceiling{\frac{D+\delta}{T}}$ i.i.d. supply functions.
The Markov chain construction can be easily extended by not only considering the $h$ different
execution times of the task $\tau$ but also the $g$ different supply
functions.
The deadline miss rate can then be computed in a similar manner by calculating the stationary distribution.

We utilize several examples to demonstrate how the Markov chain can be
constructed. Specifically, the results in
Figure~\ref{fig:missrate-different-prob-example} and
Figure~\ref{fig:DMR-example-varying-delta-And-D} show that the DMR of
the soft real-time task gets higher if its expected execution time is
higher (by keeping the same $h$ different execution times), and that the
DMR gets higher if the dismiss point is longer or the relative
deadline is shorter. To the best of our knowledge, \emph{this is the
  first result regarding the tradeoff of the DMR and the dismiss
  point}. Most results in the literature for DMR analysis consider
that the soft real-time jobs continue to execute even after their
deadline misses. However, this may have a negative impact on the DMR
as shown in Figure~\ref{fig:DMR-example-varying-delta-And-D}.  In some
of the evaluated cases (not shown in this paper due to space constraints), 
%presented in this paper), 
a high $\delta$ %may result
resulted in a very high DMR whilst a low $\delta$ %can
could keep a low DMR.  We note that the close-forms of Examples~\ref{example:DMR-N-variable}~and~\ref{example:missrate-ergodic-stationary-different-probability} can be derived by calculating the stationary distribution using Eq.~(\ref{eq:stationary-distribution}) under the law of large numbers.
These examples only serve for illustrating the concepts.

\begin{figure}[t]
	\includegraphics[width=0.95\columnwidth]{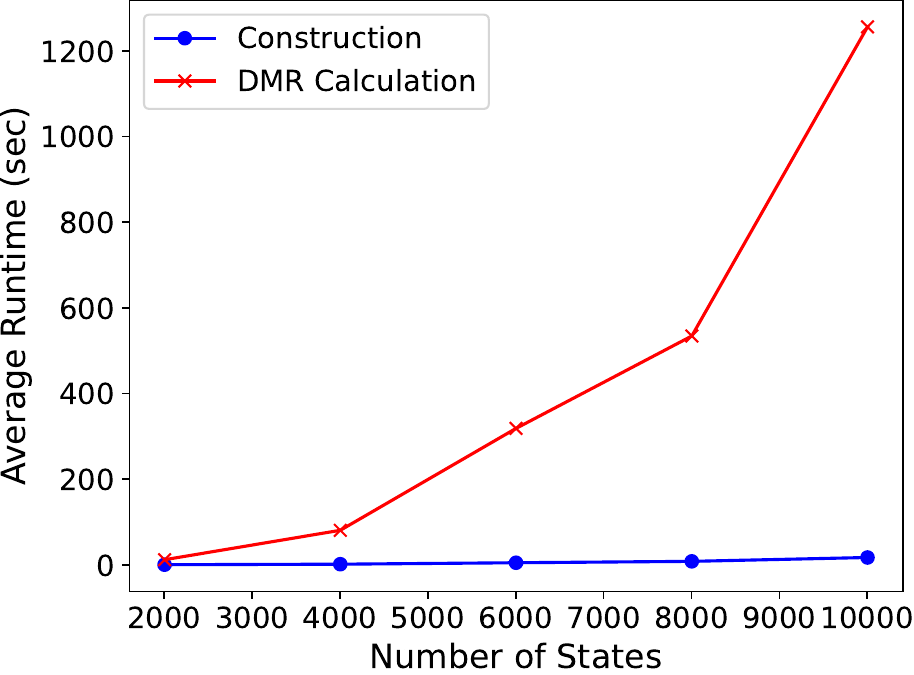}
	\caption{Average runtime over number of states}
	\label{fig:time}
\end{figure}

%Regarding 
The scalability of our approach %, it 
depends on the number of states in the constructed Markov chain, and the number of states are determined by the interplay between the factors in the considered system. We select $\delta$ as the knob for the setup as Example~\ref{example:DMR-N-variable}. The cardinality of the knob ranged from $1000$ to $5000$ in steps of $1000$, which leads to the number of steps in $\{2003,4003,6003,8003, 10003\}$. With respect to the cardinality, Figure~\ref{fig:time} shows the average runtime for the construction of the Markov chain and the calculation of DMR via Algorithm~\ref{alg:procedure}. The implementation of Algorithm~\ref{alg:procedure} utilizes the \texttt{scipy.sparse.linalg.eigs} library to numerically find eigenvectors of a sparse matrix in Python. All the tests were conducted on a laptop with an Intel i7-10610U. With the same delta, we also have tested other configurations with larger $T$, $D$, and $h$, for instance, $T=100$ and $h=10$. They all resulted in a smaller number of states. Since these results do not provide further insights, they are not reported here.

%For example, under the setup as Example~\ref{example:DMR-N-variable} when $\delta=5000$, the construction of the Markov chain leads to $10003$ states, which can be done around $17.32$ seconds by Intel i7-10610U. To derive the DMR via Algorithm~\ref{alg:procedure}, the elapsed time is $ \approx1256.18$ seconds by utilizing the \texttt{scipy.sparse.linalg.eigs} library to numerically find eigenvectors of a sparse matrix in Python. We also have tested other configurations with larger $T$, $D$, and $h$, e.g., $T=100$ and $h=10$. They all resulted in a smaller number of states than this example. 
%Since these results do not provide further insights than the examples demonstrated in this paper, they are not reported.

One feature that is probably not obvious in the model is that we do not
need $e_h$ (i.e., the worst-case execution time) of the soft real-time
task $\tau$ to be bounded, as long as the dismiss point $\delta$ is
finite and $\sum_{k=1}^{h-1} \mathbb{P}(C_j = e_k)$ can be
specified. Whenever we have to evaluate the realization $e_h$ (with
unbounded execution time) for a job, it is considered as a deadline
miss, and the GPC is fully exhausted until its dismiss point. We do
not exploit this feature, but it can be potentially combined with the
research line of probabilistic
WCET~\cite{DBLP:journals/lites/DavisC19} and measurement-based
execution time approaches.

\section{Conclusion}
\label{sec:conclusion}

We consider an arbitrary-deadline periodic soft real-time task in a
uniprocessor system. After a job misses its deadline, it can still be
executed until a dismiss point.  The analysis of the deadline miss
rate in the long run is achieved by modeling the execution behavior of
the task as a Markov chain, whilst its convergence is supported by the
ergodic theory.  To the best of our knowledge, this is the first work
that allows to specify a task-specific dismiss point when analyzing
the deadline miss rate and beyond constant bandwidth servers (CBSes)
and non-preemptive schedules.  Our results open an interesting
research direction to explore the tradeoff of the usefulness of the
soft real-time task after its deadline miss and its DMR.

We limit our attention to the fundamental properties of the DMR of a
task in this paper.  More sophisticated deadline-miss-related
behaviors such as the probability of $k$ successive deadline misses
(as also pursued by Chen et al.~\cite{DBLP:conf/rtcsa/ChenBC18}) are
interesting future work.

% bib
\label{last-page}
%\clearpage
\bibliographystyle{abbrv}
\bibliography{real-time}

\clearpage

\end{document}